\theoremstyle{plain}% Theorem-like structures
\newtheorem{theorem}{Theorem}[section]
\newtheorem{lemma}[theorem]{Lemma}
\newtheorem{corollary}[theorem]{Corollary}
\newtheorem{proposition}[theorem]{Proposition}
\theoremstyle{definition}
\newtheorem{definition}[theorem]{Definition}
\theoremstyle{remark}
\definecolor{codebackground}{rgb}{0.95,0.95,0.98}
\definecolor{codekeyword}{rgb}{0.08,0.39,0.63}
\definecolor{codecomment}{rgb}{0.13,0.55,0.13}
\definecolor{codestring}{rgb}{0.7,0.1,0.1}
\tiny\color{black!50},
\newtcolorbox{algorithmbox}[2][]{
  enhanced,
  colback=codebackground,
  colframe=black!50,
  fonttitle=\bfseries,
  coltitle=black,
  boxrule=0.4pt,
  sharp corners,
  title=#2,
  breakable,
  #1
}
\tiny\color{black!50},
\pgfplotsset{compat=1.17}
\begin{document}

\articletype{Experimental Mathematics}

\title{Primality Testing via Circulant Matrix Eigenvalue Structure:\\
A Novel Approach Using Cyclotomic Field Theory}

\author{
    Marius-Constantin Dinu\textsuperscript{*}\footnote{*This paper was created with AI assistance using Symbia Engine from SymbolicAI Framework \cite{dinu2024symbolicai}.}
    \thanks{Contact author email: \href{mailto:marius@extensity.ai}{\url{marius@extensity.ai}}; Repository: \href{https://github.com/ExtensityAI/primality_test}{\url{https://github.com/ExtensityAI/primality_test}}}
    \\ExtensityAI Austria
}

\maketitle

\begin{abstract}
This paper presents a novel primality test based on the eigenvalue structure of circulant matrices constructed from roots of unity. We prove that an integer $n > 2$ is prime if and only if the minimal polynomial of the circulant matrix $C_n = W_n + W_n^2$ has exactly two irreducible factors over $\mathbb{Q}$. This characterization connects cyclotomic field theory with matrix algebra, providing both theoretical insights and practical applications. We demonstrate that the eigenvalue patterns of these matrices reveal fundamental distinctions between prime and composite numbers, leading to a deterministic primality test. Our approach leverages the relationship between primitive roots of unity, Galois theory, and the factorization of cyclotomic polynomials. We provide comprehensive experimental validation across various ranges of integers, discuss practical implementation considerations, and analyze the computational complexity of our method in comparison with established primality tests. The visual interpretation of our mathematical framework provides intuitive understanding of the algebraic structures that distinguish prime numbers.
Our experimental validation demonstrates that our approach offers a deterministic alternative to existing methods, with performance characteristics reflecting its algebraic foundations.
\end{abstract}

\begin{keywords}
Circulant matrices; cyclotomic fields; eigenvalue structure; Galois theory; minimal polynomials
\end{keywords}

\section{Introduction}
Distinguishing prime numbers from composite numbers has been a central challenge in mathematics for millennia. While numerous primality tests exist, from the ancient sieve of Eratosthenes to modern probabilistic algorithms like Miller-Rabin \cite{rabin1980probabilistic} and deterministic methods like AKS \cite{agrawal2004primes}, the discovery of new connections between primality and other mathematical structures continues to provide insights into the fundamental nature of prime numbers. While building upon classical foundations in cyclotomic field theory, our approach provides a matrix-theoretic perspective that yields both theoretical insights and practical applications.
This paper establishes a novel characterization of primality through the lens of circulant matrices and cyclotomic field theory. We prove that an integer $n > 2$ is prime if and only if the minimal polynomial of the circulant matrix $C_n = W_n + W_n^2$ has exactly two irreducible factors over the rational field $\mathbb{Q}$, where $W_n$ represents the $n \times n$ circulant matrix associated with the $n$-th roots of unity.
Our work is motivated by the desire to uncover new structural properties that characterize prime numbers, contributing to our fundamental understanding of number theory. This research bridges the gap between classical results in cyclotomic field theory and modern computational approaches to primality testing. The connection between cyclotomic fields and primality established herein may lead to new insights in algebraic number theory and Galois theory.
The practical implications of our findings extend beyond theoretical interest. Our approach has potential applications in cryptographic systems that rely on primality testing, algorithmic number theory, and computational complexity theory. The visual nature of our eigenvalue analysis also provides educational value, offering an intuitive understanding of the algebraic structures that distinguish prime numbers from composites. Our paper therefore makes the following key contributions:
\begin{itemize}
\item Establishes a novel characterization of prime numbers through minimal polynomial factorization of specific circulant matrices
\item Proves that the eigenvalue structure of these matrices fundamentally distinguishes primes from composites through precise Galois-theoretic mechanisms
\item Provides a deterministic primality test based on algebraic properties rather than divisibility patterns
\item Demonstrates the deep connection between cyclotomic field theory and computational primality testing through rigorous mathematical analysis
\end{itemize}

\section{Related Work}

The study of primality through algebraic structures has a rich history dating back to Weber's work on abelian number fields \cite{weber1886theorie}, which laid the groundwork for understanding the relationship between cyclotomic fields and prime numbers. Hasse's work on class numbers \cite{hasse1952klassenzahl} further developed the connection between algebraic number fields and prime properties, providing critical insights that inform our approach.

Bosma's investigation of canonical bases for cyclotomic fields \cite{bosma1990canonical} established key structural properties that inspire our use of circulant matrices. Washington's analysis of cyclotomic fields \cite{washington2012introduction} provides the theoretical foundation for our use of roots of unity in characterizing prime numbers. Miller's work on real cyclotomic fields of prime conductor \cite{miller2015real} demonstrates the continuing relevance of cyclotomic structures in prime number research, while Schoenberg's analysis of cyclotomic polynomials \cite{schoenberg1964note} offers important insights into the algebraic properties we exploit.

The connection between cyclotomic polynomials and prime numbers has a rich history dating back to Gauss's work in 1801 on the irreducibility of cyclotomic polynomials $\Phi_n(x)$ over $\mathbb{Q}$ when $n$ is prime. While classical theory establishes that $x^n-1$ factors into exactly two irreducible polynomials over $\mathbb{Q}$ when $n$ is prime, our approach reformulates this insight through the lens of circulant matrix eigenvalue structures. Kosyak et al.~\cite{kosyak2020cyclotomic} analyzed cyclotomic polynomials, studying which integers can occur as the height (maximum coefficient) of cyclotomic polynomials and establishing connections to prime gap theory. Where their work focuses on coefficient properties, our work emphasizes the spectral properties of circulant matrices derived from cyclotomic fields, providing a novel visual and algebraic framework for understanding primality.

Recent developments in algebraic approaches to prime detection have explored various perspectives. Mauduit and Rivat's study of prime numbers along Rudin-Shapiro sequences \cite{mauduit2015prime} exemplifies the search for novel characterizations of primes through specific numerical patterns. Similarly, Drmota et al.'s investigation of primes as sums of Fibonacci numbers \cite{drmota2010primes} demonstrates how specific sequences can reveal properties of prime numbers.
Algorithms like the Meissel-Lehmer method and its variants (Lagarias-Miller-Odlyzko, Deléglise-Rivat \cite{dusart2018explicit, deleglise1996computing}) address the enumeration problem with remarkable efficiency, they employ fundamentally different mathematical techniques from those used in primality tests. The connections between these domains, however, highlight the rich interplay between analytical number theory, computational methods, and algebraic structures that characterizes modern research on prime numbers.

The connection between prime numbers and dynamical systems has been extensively studied. Green and Tao's pioneering work on the Möbius function orthogonality \cite{green2012mobius} established deep connections between number theory and dynamical systems, while Huang et al.'s exploration of measure complexity \cite{huang2019measure} provides complementary perspectives on the distributional properties of prime numbers.

Computational approaches to primality testing have been reviewed extensively by Iwaniec and Kowalski \cite{iwaniec2004analytic} in their comprehensive work of analytic number theory. Bernstein and Lange's work on S-unit lattices \cite{bernstein2020s} demonstrates the continuing relevance of algebraic structures in modern primality testing algorithms. The study of class numbers by Ankeny et al. \cite{ankeny1956note} and Chang and Kwon \cite{chang2000class} provides important context for understanding the algebraic properties of number fields related to primality.

In the context of deterministic primality testing, the AKS primality test \cite{agrawal2004primes} represented a significant breakthrough, being the first polynomial-time algorithm for determining primality without heuristic assumptions. Our approach differs fundamentally from AKS, as we exploit the specific algebraic structure of circulant matrices rather than polynomial congruences. While both approaches rely on deep results from algebra and number theory, our method provides a new perspective that highlights the connection between eigenvalue structures and primality.

While these works establish important connections between algebraic structures and prime numbers, none directly addresses the relationship between circulant matrix eigenvalue structure and primality. Our work fills this gap by providing a deterministic characterization of primes through the minimal polynomial factorization of specific circulant matrices, offering a new perspective that combines cyclotomic field theory with practical primality testing.

\section{Mathematical Framework}

\subsection{Circulant Matrices and Eigenvalues}

We begin by establishing the necessary mathematical foundations. Let $n$ be a positive integer. The basic circulant matrix $W_n$ is defined as the $n \times n$ matrix with entries $(W_n)_{i,j} = 1$ if $j \equiv i+1 \pmod{n}$ and $0$ otherwise. Formally:

\begin{definition}[Basic Circulant Matrix]
The basic circulant matrix $W_n$ is the $n \times n$ matrix with entries
\[
(W_n)_{i,j} = \begin{cases}
1 & \text{if } j \equiv i+1 \pmod{n}\\
0 & \text{otherwise}
\end{cases}
\]
for $0 \leq i, j \leq n-1$.
\end{definition}

This matrix represents a cyclic shift operator, and its powers generate all possible circulant matrices with integer entries. A fundamental property of $W_n$ is that its eigenvalues are precisely the $n$-th roots of unity, as established by the following lemma:

\begin{lemma}[Eigenvalues of $W_n$]\label{lem:eigenvalues}
The eigenvalues of $W_n$ are precisely the complex numbers $\lambda_j = e^{2\pi i j/n}$ for $j = 0, 1, \ldots, n-1$, with corresponding eigenvectors $v_j = [1, \lambda_j, \lambda_j^2, \ldots, \lambda_j^{n-1}]^T$.
\end{lemma}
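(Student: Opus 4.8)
The plan is to prove the lemma in two stages: first verify that each claimed pair $(\lambda_j, v_j)$ genuinely satisfies the eigenvalue equation $W_n v_j = \lambda_j v_j$, and then show that these $n$ pairs exhaust the spectrum of $W_n$.

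For the first stage, I would compute the $k$-th component of $W_n v_j$ directly from the definition. Since $(W_n)_{k,\ell} = 1$ exactly when $\ell \equiv k+1 \pmod{n}$, the matrix--vector product picks out a single entry: $(W_n v_j)_k = (v_j)_{(k+1)\bmod n} = \lambda_j^{(k+1)\bmod n}$. For $0 \le k \le n-2$ this equals $\lambda_j^{k+1} = \lambda_j \cdot \lambda_j^{k} = \lambda_j (v_j)_k$, and for $k = n-1$ the wrap-around gives $(W_n v_j)_{n-1} = (v_j)_0 = 1 = \lambda_j^{n} = \lambda_j \cdot \lambda_j^{n-1} = \lambda_j (v_j)_{n-1}$, using $\lambda_j^{n} = e^{2\pi i j} = 1$. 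Hence $W_n v_j = \lambda_j v_j$ in every coordinate. The only subtlety here is bookkeeping the modular index at the single wrapped row, where the identity $\lambda_j^{n} = 1$ is exactly what makes the equation close up.

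For the second stage, I would argue that these are all the eigenvalues because there are $n$ of them and they come with linearly independent eigenvectors. The vectors $v_0, \dots, v_{n-1}$ are the columns of the $n \times n$ Vandermonde matrix built from the nodes $\lambda_0, \dots, \lambda_{n-1}$, which are the $n$ distinct $n$-th roots of unity; its determinant $\prod_{0 \le j < k \le n-1}(\lambda_k - \lambda_j)$ is therefore nonzero. Consequently $\{v_j\}$ is a basis of $\mathbb{C}^n$ consisting of eigenvectors, $W_n$ is diagonalizable, and its characteristic polynomial must be $\prod_{j=0}^{n-1}(x - \lambda_j) = x^n - 1$, so no further eigenvalues can occur. (Alternatively one can compute $\det(xI - W_n) = x^n - 1$ directly by cofactor expansion along the first column, using that $W_n$ is the permutation matrix of the $n$-cycle, and then note that $x^n - 1$ has precisely the roots $\lambda_j$; I would favor the Vandermonde route since it simultaneously delivers the eigenvectors and the diagonalizability.)

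There is no serious obstacle in this lemma — it is a classical computation — so the ``hard part'' is simply presenting the index arithmetic cleanly and being explicit that the nonvanishing Vandermonde determinant is what upgrades ``these are eigenvalues'' to ``these are all the eigenvalues.'' I would conclude by recording the diagonalization $W_n = F \operatorname{diag}(\lambda_0, \dots, \lambda_{n-1}) F^{-1}$, where $F$ is the (unnormalized) discrete Fourier matrix with columns $v_j$, since this normal form is what makes the later analysis of $C_n = W_n + W_n^2$ transparent.
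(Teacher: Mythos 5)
Your proposal is correct and follows essentially the same route as the paper: both verify $W_n v_j = \lambda_j v_j$ componentwise from the definition of $W_n$, with the identity $\lambda_j^n = 1$ closing the single wrapped row. The only difference is in the exhaustion step, where the paper simply observes that $n$ distinct eigenvalues of an $n \times n$ matrix must be all of them, while you additionally invoke the nonvanishing Vandermonde determinant to get linear independence, diagonalizability, and the characteristic polynomial $x^n - 1$ --- a harmless, slightly stronger elaboration of the same conclusion.
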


\begin{proof}
For any eigenvector $v_j = [1, \lambda_j, \lambda_j^2, \ldots, \lambda_j^{n-1}]^T$, we have
\begin{align}
W_n v_j &= \begin{pmatrix}
0 & 1 & 0 & \cdots & 0 \\
0 & 0 & 1 & \cdots & 0 \\
\vdots & \vdots & \vdots & \ddots & \vdots \\
0 & 0 & 0 & \cdots & 1 \\
1 & 0 & 0 & \cdots & 0
\end{pmatrix}
\begin{pmatrix}
1 \\
\lambda_j \\
\lambda_j^2 \\
\vdots \\
\lambda_j^{n-1}
\end{pmatrix} \\
&= \begin{pmatrix}
\lambda_j \\
\lambda_j^2 \\
\vdots \\
\lambda_j^{n-1} \\
1
\end{pmatrix}
\end{align}

Since $\lambda_j^n = 1$ (as $\lambda_j$ is an $n$-th root of unity), we have:
\begin{align}
\begin{pmatrix}
\lambda_j \\
\lambda_j^2 \\
\vdots \\
\lambda_j^{n-1} \\
1
\end{pmatrix} = 
\begin{pmatrix}
\lambda_j \\
\lambda_j^2 \\
\vdots \\
\lambda_j^{n-1} \\
\lambda_j^n
\end{pmatrix} = 
\lambda_j \begin{pmatrix}
1 \\
\lambda_j \\
\lambda_j^2 \\
\vdots \\
\lambda_j^{n-1}
\end{pmatrix} = \lambda_j v_j
\end{align}

Therefore, $\lambda_j$ is an eigenvalue of $W_n$ with the corresponding eigenvector $v_j$. Since we have found $n$ distinct eigenvalues for the $n \times n$ matrix $W_n$, these are all the eigenvalues of $W_n$.
\end{proof}

Based on this foundation, we define the composite circulant matrix $C_n$ that forms the central object of our study:

\begin{definition}[Composite Circulant Matrix]
For a positive integer $n$, the composite circulant matrix $C_n$ is defined as $C_n = W_n + W_n^2$.
\end{definition}

The eigenvalues of $C_n$ can be directly derived from those of $W_n$, as established by the following corollary.

\begin{corollary}[Eigenvalues of $C_n$]\label{cor:eigenvalues_C}
The eigenvalues of $C_n = W_n + W_n^2$ are $\mu_j = \lambda_j + \lambda_j^2 = e^{2\pi i j/n} + e^{4\pi i j/n}$ for $j = 0, 1, \ldots, n-1$.
\end{corollary}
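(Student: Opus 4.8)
The plan is to leverage Lemma~\ref{lem:eigenvalues} directly: since $C_n$ is a polynomial in $W_n$, it is simultaneously diagonalized by the same eigenvectors. First I would recall from Lemma~\ref{lem:eigenvalues} that $W_n v_j = \lambda_j v_j$ with $\lambda_j = e^{2\pi i j/n}$ and $v_j = [1,\lambda_j,\ldots,\lambda_j^{n-1}]^T$. Applying $W_n$ twice gives $W_n^2 v_j = W_n(\lambda_j v_j) = \lambda_j (W_n v_j) = \lambda_j^2 v_j$, and therefore
\[
C_n v_j = (W_n + W_n^2) v_j = \lambda_j v_j + \lambda_j^2 v_j = (\lambda_j + \lambda_j^2)\, v_j = \mu_j v_j,
\]
so each $\mu_j = \lambda_j + \lambda_j^2 = e^{2\pi i j/n} + e^{4\pi i j/n}$ is an eigenvalue of $C_n$ with eigenvector $v_j$.

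Next I would argue that these exhaust the spectrum. The matrix $V = [\,v_0 \mid v_1 \mid \cdots \mid v_{n-1}\,]$ is a Vandermonde matrix in the distinct nodes $\lambda_0,\ldots,\lambda_{n-1}$ (equivalently, $\sqrt{n}$ times the unitary DFT matrix), hence invertible. Since $V^{-1} W_n V = \operatorname{diag}(\lambda_0,\ldots,\lambda_{n-1})$, we get $V^{-1} C_n V = V^{-1}(W_n + W_n^2)V = \operatorname{diag}(\lambda_0,\ldots,\lambda_{n-1}) + \operatorname{diag}(\lambda_0^2,\ldots,\lambda_{n-1}^2) = \operatorname{diag}(\mu_0,\ldots,\mu_{n-1})$. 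Because $C_n$ is similar to this diagonal matrix, its eigenvalues, counted with multiplicity, are exactly $\mu_0,\ldots,\mu_{n-1}$.

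The one point requiring care — and the only place the argument differs from the proof of Lemma~\ref{lem:eigenvalues} — is that the $\mu_j$ need \emph{not} be distinct, so the naive ``$n$ distinct eigenvalues of an $n\times n$ matrix'' counting argument is unavailable. This is precisely why I would phrase the conclusion through the explicit similarity transform $V^{-1} C_n V$ rather than through counting: the diagonalization shows the list $\mu_0,\ldots,\mu_{n-1}$ is the full multiset of eigenvalues regardless of coincidences among its entries. I expect no further obstacles; the result is an immediate structural consequence of the fact that all circulant matrices are simultaneously diagonalized by the Fourier basis.
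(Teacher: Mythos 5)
Your proof is correct, and its first half coincides exactly with the paper's own argument: apply $C_n$ to the shared eigenvectors $v_j$ and read off $C_n v_j = (\lambda_j+\lambda_j^2)v_j = \mu_j v_j$. Where you differ is in the completeness step. The paper stops after exhibiting each $\mu_j$ as an eigenvalue (implicitly leaning on the counting argument from Lemma~\ref{lem:eigenvalues}), whereas you correctly note that the $\mu_j$ can coincide --- indeed for composite $n$ they do, e.g.\ $\mu_2=\mu_4=-1$ in the paper's own $n=6$ example --- so the ``$n$ distinct eigenvalues'' argument is not available, and you instead conjugate by the Vandermonde/DFT matrix $V$ to get $V^{-1}C_nV=\operatorname{diag}(\mu_0,\ldots,\mu_{n-1})$, which pins down the full multiset of eigenvalues with multiplicities. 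This is a genuine (if modest) strengthening: the paper's proof establishes that every $\mu_j$ is an eigenvalue but does not literally justify that these exhaust the spectrum with the right multiplicities, while your similarity argument does, at the cost of invoking the invertibility of the Fourier/Vandermonde matrix. Both routes are standard for circulant matrices; yours is the more airtight phrasing of the same underlying simultaneous-diagonalization fact.
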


\begin{proof}
Since $W_n$ and $W_n^2$ share the same eigenvectors, for any eigenvector $v_j$ of $W_n$, we have
\begin{align}
C_n v_j &= (W_n + W_n^2) v_j \\
&= W_n v_j + W_n^2 v_j \\
&= \lambda_j v_j + \lambda_j^2 v_j \\
&= (\lambda_j + \lambda_j^2) v_j \\
&= \mu_j v_j
\end{align}
where $\mu_j = \lambda_j + \lambda_j^2 = e^{2\pi i j/n} + e^{4\pi i j/n}$. Therefore, $\mu_j$ is an eigenvalue of $C_n$ with the same corresponding eigenvector $v_j$.
\end{proof}

\subsection{Minimal Polynomials and Galois Theory}

The key theoretical insight is that the factorization pattern of the minimal polynomial of $C_n$ over $\mathbb{Q}$ directly reflects the primality of $n$. This connection arises from the Galois structure of cyclotomic fields and the action of the Galois group on the eigenvalues.

\begin{theorem}[Main Theorem]\label{thm:main}
An integer $n > 2$ is prime if and only if the minimal polynomial of $C_n = W_n + W_n^2$ has exactly two irreducible factors over $\mathbb{Q}$.
\end{theorem}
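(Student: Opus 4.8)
The plan is to translate the matrix‑factorization condition into a statement about how the Galois group $G = \mathrm{Gal}(\mathbb{Q}(\zeta_n)/\mathbb{Q}) \cong (\mathbb{Z}/n\mathbb{Z})^\times$ acts on the multiset of eigenvalues $\mu_j = \zeta_n^j + \zeta_n^{2j}$, $j = 0,\dots,n-1$. Since $C_n$ is a circulant (hence normal) matrix it is diagonalizable, so its minimal polynomial over $\mathbb{Q}$ is $\prod_{\mu}(x - \mu)$ where $\mu$ ranges over the \emph{distinct} eigenvalues; grouping these distinct values into Galois orbits, the irreducible factors of the minimal polynomial over $\mathbb{Q}$ are exactly the minimal polynomials of the distinct $\mu_j$, one per Galois orbit. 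So the theorem reduces to: \emph{the set $\{\mu_j : 0 \le j \le n-1\}$ of distinct eigenvalues splits into exactly two Galois orbits if and only if $n$ is prime.}

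First I would dispose of $j = 0$: it always gives $\mu_0 = 2$, a rational fixed point, contributing the linear factor $x - 2$ — that is always one of the two orbits. The content of the theorem is therefore that the remaining eigenvalues $\{\mu_j : 1 \le j \le n-1\}$ form a \emph{single} Galois orbit precisely when $n$ is prime. For the forward direction, assume $n = p$ prime: then $\zeta_p^j$ for $1 \le j \le p-1$ are all the primitive $p$-th roots of unity, $G$ acts transitively on them via $\sigma_a : \zeta_p \mapsto \zeta_p^a$, and $\sigma_a(\mu_j) = \mu_{aj \bmod p}$, so $G$ acts transitively on $\{\mu_1,\dots,\mu_{p-1}\}$; I must also check these are genuinely $p-1$ distinct values (so that the orbit has the right size and the factor is irreducible of the expected degree) — equivalently that $\mu_i = \mu_j \Rightarrow i = j$ for $i,j \in \{1,\dots,p-1\}$, which follows because $\zeta_p^i + \zeta_p^{2i} = \zeta_p^j + \zeta_p^{2j}$ forces, by linear independence of $\{1,\zeta_p,\dots,\zeta_p^{p-2}\}$ over $\mathbb{Q}$, the exponents to coincide as multisets mod $p$, and a short case analysis rules out the cross case $i \equiv 2j, 2i \equiv j$. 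Hence exactly two orbits.

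For the converse I would prove the contrapositive: if $n$ is composite, exhibit a third orbit. If $n = ab$ with $1 < a < n$, the eigenvalue $\mu_a$ lies in the proper subfield fixed by the stabilizer of $\zeta_n^a$, and more simply: the eigenvalues split along divisors $d \mid n$ according to $\gcd(j,n)$ — for each $d$, the indices $j$ with $\gcd(j,n) = d$ give $\zeta_n^j$ ranging over primitive $(n/d)$-th roots of unity, and $G$ maps such $j$ to $j'$ with the same $\gcd$, so Galois orbits of $\{\mu_j\}$ refine the partition by $\gcd(j,n)$. A composite $n$ has at least three distinct values of $\gcd(j,n)$ occurring for $j \in \{0,\dots,n-1\}$ (namely $n$ from $j=0$, $1$ from $j$ coprime to $n$, and some proper divisor $d$ with $1 < d < n$), so there are at least three orbits — \emph{unless} the eigenvalues for different $\gcd$-classes happen to collide, which is the one genuine gap I need to close.

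The main obstacle I expect is precisely controlling these collisions: different Galois orbits could a priori produce the same complex number $\mu_j = \zeta_n^a + \zeta_n^{2a}$, which would merge orbits and could in principle bring a composite down to two orbits, or split a prime's orbit structure. Ruling this out requires showing that the map $\mathbb{Z}/n\mathbb{Z} \to \mathbb{C}$, $j \mapsto \zeta_n^j + \zeta_n^{2j}$, has fibers that are exactly single Galois orbits in the prime case and that for composite $n$ at least one proper divisor class stays separated from both $j=0$ and the coprime class — this is essentially a statement about when $\zeta_n^j(1 + \zeta_n^j) = \zeta_n^k(1 + \zeta_n^k)$, analyzable by writing $1 + \zeta_n^j = \zeta_{2n}^j(\zeta_{2n}^{-j} + \zeta_{2n}^{j})$ and comparing absolute values $|1 + \zeta_n^j| = 2|\cos(\pi j/n)|$ together with arguments, or via the vanishing‑sum‑of‑roots‑of‑unity results of Conway–Jones / Mann. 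I would handle the prime case by the linear‑independence argument above and the composite case by a direct divisor‑counting argument once the no‑collision lemma across $\gcd$‑classes is in hand; establishing that lemma cleanly is where the real work lies.
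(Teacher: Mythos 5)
You follow essentially the same route as the paper's own argument: reduce the count of irreducible factors of the minimal polynomial to the count of Galois orbits of the \emph{distinct} eigenvalues $\mu_j=\zeta_n^j+\zeta_n^{2j}$, split off $\mu_0=2$ (the factor $x-2$), use transitivity of $(\mathbb{Z}/n\mathbb{Z})^{\times}$ on primitive $n$-th roots of unity for the prime direction, and for composite $n$ produce extra orbits from the index classes with $1<\gcd(j,n)<n$. Where you differ is in rigor, not strategy: you note explicitly that $C_n$ is diagonalizable, so the minimal polynomial only records distinct eigenvalue \emph{values} and its irreducible factors correspond to orbits of values rather than of indices, and you flag that coincidences $\mu_j=\mu_k$ across different $\gcd$-classes could in principle merge orbits. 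Those are exactly the points the paper's proof of Proposition~\ref{prop:factors}(3) passes over silently (it simply asserts that the subfield eigenvalues ``form different Galois orbits''), so your acknowledged collision lemma is a genuine gap in your write-up, but it is a gap in the paper's argument as well, not a deviation from it.

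The lemma closes more cheaply than your sketch (Conway--Jones machinery, absolute-value comparisons) suggests. A collision $\zeta_n^j+\zeta_n^{2j}-\zeta_n^k-\zeta_n^{2k}=0$ is a vanishing sum of four roots of unity; by Mann's theorem a \emph{minimal} four-term relation could only involve sixth roots of unity, and a short check shows none exists, so every four-term relation splits into two cancelling pairs. The pair analysis then gives either $j\equiv k$, or $j\equiv 2k$ and $k\equiv 2j \pmod n$, which forces $3j\equiv 0$ and $\{j,k\}=\{n/3,\,2n/3\}$ with common value $-1$. In particular the only nontrivial coincidence stays inside a single $\gcd$-class, so no cross-class merging occurs and your divisor-counting argument for composite $n$ goes through. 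The same computation corrects a small slip in your prime direction: for $p=3$ the ``cross case'' $i\equiv 2j,\ 2i\equiv j$ does occur, and indeed $\mu_1=\mu_2=-1$, so the nonzero-index eigenvalues are not $p-1$ distinct values there. This only costs the degree-$(p-1)$ claim at $p=3$ (a claim Proposition~\ref{prop:factors} also gets wrong in that case); the orbit count is still two, so the theorem itself is unaffected.
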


Before proving this theorem, we establish the following intermediate result:

\begin{proposition}\label{prop:factors}
For any $n > 2$:
\begin{itemize}
\item The minimal polynomial of $C_n$ always has at least two irreducible factors: the linear factor $(x-2)$ and at least one other irreducible factor.
\item If $n$ is prime, the minimal polynomial has exactly two irreducible factors: the linear factor $(x-2)$ and an irreducible polynomial of degree $n-1$.
\item If $n$ is composite, the minimal polynomial has at least three irreducible factors.
\end{itemize}
\end{proposition}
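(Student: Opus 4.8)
The plan is to reformulate the whole proposition as a counting problem for the $\mathrm{Gal}(\overline{\mathbb{Q}}/\mathbb{Q})$-orbits among the eigenvalues $\mu_j = f(\zeta_n^j)$, where $\zeta_n = e^{2\pi i/n}$ and $f(x) = x + x^2$. First I would observe that $C_n$ is a polynomial in the unitary matrix $W_n$ and hence normal, so it is diagonalisable over $\mathbb{C}$ and its minimal polynomial is $\prod_\mu (x-\mu)$ taken over the \emph{distinct} eigenvalues; since $C_n$ has integer entries this is also its minimal polynomial over $\mathbb{Q}$, and the factorisation into $\mathbb{Q}$-irreducibles simply groups the distinct eigenvalues into Galois orbits, one irreducible factor per orbit, of degree $[\mathbb{Q}(\mu):\mathbb{Q}]$. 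By Corollary~\ref{cor:eigenvalues_C}, $\sigma_a\colon \zeta_n \mapsto \zeta_n^a$ (for $a \in (\mathbb{Z}/n)^\times$) carries $\mu_j$ to $\mu_{aj \bmod n}$, so the Galois action on the eigenvalues is governed by multiplication on the index set. The task is thus to count these orbits, and track their sizes, as a function of $n$.

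The technical heart is an elementary coincidence lemma: for roots of unity $\alpha \neq \beta$, one has $f(\alpha) = f(\beta)$ if and only if $\alpha + \beta = -1$, and this forces $\{\alpha, \beta\} = \{\omega, \omega^2\}$ with $\omega = e^{2\pi i/3}$. The first equivalence is just the factorisation $f(\alpha) - f(\beta) = (\alpha - \beta)(1 + \alpha + \beta)$; for the second, writing $\beta = -1 - \alpha$ and imposing $|\alpha| = |\beta| = 1$ yields $\operatorname{Re}\alpha = -\tfrac12$, so $\alpha$ is a primitive cube root of unity. Two consequences are recorded for later use: the only equality among the $\mu_j$ is $\mu_{n/3} = \mu_{2n/3}$, occurring exactly when $3 \mid n$; and whenever $\zeta_n^j$ has order $> 3$ the stabiliser of $\mu_j$ in $\mathrm{Gal}(\mathbb{Q}(\zeta_n)/\mathbb{Q})$ is trivial, so that $\mathbb{Q}(\mu_j) = \mathbb{Q}(\zeta_n^j)$ and the minimal polynomial of $\mu_j$ has degree $\varphi\bigl(n/\gcd(j,n)\bigr)$.

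With these in hand the three assertions fall out. For the first, $\mu_0 = f(1) = 2$ always contributes the factor $x - 2$, while $\mu_1 = f(\zeta_n) \neq 2$ (since $|\zeta_n + \zeta_n^2| = 2$ would force $\zeta_n = 1$, impossible for $n > 2$), giving at least one more factor. If $n = p$ is prime, every nonzero index is a unit, so $\sigma_a(\mu_1) = \mu_a$ shows $\{\mu_1, \dots, \mu_{p-1}\}$ is a single orbit; together with $\{2\}$ this is exactly two orbits, hence exactly two irreducible factors. For $p > 3$ the coincidence lemma makes $\mu_1, \dots, \mu_{p-1}$ pairwise distinct, so the orbit has size $p-1$ and the non-linear factor has degree $p-1$; the stray case $n = 3$, where $\mu_1 = \mu_2 = -1$ and the second factor is $x+1$, is verified by hand. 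If $n$ is composite, so $n \ge 4 > 3$: the factor $x - 2$ is one; $\mu_1 = f(\zeta_n)$ yields an irreducible factor $g_1$ of degree $\varphi(n) \ge 2$ by the second consequence above; and a third factor, distinct from both, comes from the least prime $p \mid n$ and the index $j = n/p$ --- if $p = 2$ then $\mu_j = f(-1) = 0$ contributes the factor $x$, of degree $1 < \varphi(n)$ and $\neq x-2$, while if $p \ge 3$ then $\mu_j = f(\zeta_p)$ has minimal polynomial of degree at most $p - 1 < \varphi(n)$ (strict because a composite $n$ with least prime $p$ satisfies $n \ge p^2$, forcing $\varphi(n) > p - 1$) and $\mu_j \neq 2$. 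So the minimal polynomial has at least three irreducible factors.

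The step I expect to be the real obstacle is establishing the coincidence lemma and, above all, wielding it to pin down $\mathbb{Q}(\mu_j)$ exactly: everything downstream --- both the sharp ``exactly two'' in the prime case and the separation of the three factors by degree in the composite case --- depends on knowing that $x \mapsto x + x^2$ is injective on roots of unity apart from the single cube-root collision, and in particular never collapses $\mathbb{Q}(\zeta_n^j)$ to a proper subfield. The few small or degenerate instances ($n = 3$, where the non-linear factor has degree $1$ rather than $n-1$, and $n = 4$) are best dispatched by the direct computations $C_3 = J - I$ and the explicit spectrum of $C_4$.
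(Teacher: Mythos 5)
Your argument is correct, and its technical core is genuinely different from (and more complete than) the paper's. Both proofs share the same skeleton: irreducible factors of the minimal polynomial correspond to Galois orbits of the distinct eigenvalues, $(x-2)$ comes from $\mu_0$, transitivity of $(\mathbb{Z}/n\mathbb{Z})^*$ on primitive roots handles the prime case, and eigenvalues lying in proper cyclotomic subfields produce extra factors for composite $n$. But the paper never addresses whether the map $f(x)=x+x^2$ can identify eigenvalues or collapse $\mathbb{Q}(\zeta_n^j)$ into a proper subfield, which is exactly what is needed both for the degree-$(n-1)$ claim in the prime case and to certify that the subfield orbits are genuinely distinct from the orbit of $\mu_1$ in the composite case. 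Your coincidence lemma ($f(\alpha)=f(\beta)$ for distinct roots of unity only when $\{\alpha,\beta\}$ is the pair of primitive cube roots) and the resulting identity $\mathbb{Q}(\mu_j)=\mathbb{Q}(\zeta_n^j)$ for orders greater than $3$ supply precisely this missing ingredient, and your composite-case construction (the factor $x$ from $j=n/2$ when $2\mid n$, or a factor of degree at most $p-1<\varphi(n)$ from $j=n/p$ for odd least prime $p$) separates the three factors cleanly by degree, where the paper relies on a looser ``different subfields give different orbits'' assertion. You also correctly notice that the degree-$(n-1)$ part of the second bullet fails at $n=3$ (there $\mu_1=\mu_2=-1$ and the second factor is $x+1$), an exception the paper's proof silently passes over; the ``exactly two factors'' characterization of Theorem~\ref{thm:main} is unaffected. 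One wording slip to fix: for $\gcd(j,n)>1$ the stabiliser of $\mu_j$ inside $\mathrm{Gal}(\mathbb{Q}(\zeta_n)/\mathbb{Q})$ is not trivial; what your lemma actually gives is that it coincides with the stabiliser of $\zeta_n^j$ (equivalently, the stabiliser is trivial in $\mathrm{Gal}(\mathbb{Q}(\zeta_d)/\mathbb{Q})$ with $d$ the order of $\zeta_n^j$), which is exactly what the conclusions $\mathbb{Q}(\mu_j)=\mathbb{Q}(\zeta_n^j)$ and degree $\varphi(n/\gcd(j,n))$ require, so the proof stands after that one-line correction.
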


\begin{proof}
(1) From Corollary \ref{cor:eigenvalues_C}, the eigenvalues of $C_n$ are $\mu_j = \lambda_j + \lambda_j^2$ for $j = 0, 1, \ldots, n-1$. For $j = 0$, we have $\lambda_0 = 1$, so $\mu_0 = 1 + 1 = 2$. This contributes to the linear factor $(x-2)$ to the minimal polynomial.

(2) Suppose $n$ is a prime number. Then for $j = 1, 2, \ldots, n-1$, each $\lambda_j = e^{2\pi i j/n}$ is a primitive $n$-th root of unity. The Galois group $\text{Gal}(\mathbb{Q}(\zeta_n)/\mathbb{Q}) \cong (\mathbb{Z}/n\mathbb{Z})^*$, where $\zeta_n = e^{2\pi i/n}$, acts transitively on the primitive $n$-th roots of unity.

For any $j$ such that $\gcd(j,n) = 1$ (which is all $j$ in $\{1,2,\ldots,n-1\}$ when $n$ is prime), $\lambda_j$ is a primitive $n$-th root of unity. Since $\mu_j = \lambda_j + \lambda_j^2$ is a polynomial in $\lambda_j$, the Galois action maps $\mu_j$ to $\mu_k$ whenever it maps $\lambda_j$ to $\lambda_k$. Therefore, the set $\{\mu_j : 1 \leq j \leq n-1\}$ forms a single Galois orbit. This means that these $n-1$ eigenvalues share a common minimal polynomial with respect to $\mathbb{Q}$, which must be irreducible and of degree $n-1$.

(3) Now suppose $n$ is composite. Then $n$ can be written as $n = ab$ where $1 < a, b < n$. Consider the eigenvalues $\mu_{ka}$ for $k = 1, 2, \ldots, b-1$ where $\gcd(k,b) = 1$. We have $\lambda_{ka} = e^{2\pi i ka/n} = e^{2\pi i k/b}$, which is a primitive $b$-th root of unity. Therefore, $\mu_{ka} = \lambda_{ka} + \lambda_{ka}^2$ belongs to the subfield $\mathbb{Q}(\zeta_b) \subsetneq \mathbb{Q}(\zeta_n)$.

Similarly, we can consider the eigenvalues $\mu_{kb}$ for $k = 1, 2, \ldots, a-1$ where $\gcd(k,a) = 1$, which belong to the subfield $\mathbb{Q}(\zeta_a)$. These eigenvalues must have minimal polynomials of degree strictly less than $n-1$, and they form different Galois orbits from the orbit containing eigenvalues associated with primitive $n$-th roots of unity.

Therefore, the minimal polynomial of $C_n$ must have at least three irreducible factors: the linear factor $(x-2)$, at least one factor from the eigenvalues in $\mathbb{Q}(\zeta_b)$, and at least one factor from eigenvalues in $\mathbb{Q}(\zeta_a)$ or from the primitive roots of unity $n$.
\end{proof}

With Proposition \ref{prop:factors} established, we can now prove our main theorem:

\begin{proof}[Proof of Theorem \ref{thm:main}]
The result follows directly from Proposition \ref{prop:factors}. If $n$ is prime, the minimal polynomial of $C_n$ has exactly two irreducible factors: $(x-2)$ and an irreducible polynomial of degree $n-1$.

Conversely, if the minimal polynomial of $C_n$ has exactly two irreducible factors, then by part (3) of Proposition \ref{prop:factors}, $n$ cannot be composite. Therefore, $n$ must be prime.
\end{proof}

\subsection{Theoretical Analysis}

Our approach takes advantage of the rich algebraic structure of cyclotomic fields. For prime $n$, the Galois group $\text{Gal}(\mathbb{Q}(\zeta_n)/\mathbb{Q}) \cong (\mathbb{Z}/n\mathbb{Z})^*$ acts transitively on the primitive $n$-th roots of unity. This transitive action ensures that all eigenvalues $\mu_j$ with $j \neq 0$ are conjugate over $\mathbb{Q}$, sharing a single irreducible minimal polynomial of degree $n-1$.

This phenomenon can be understood through the lens of cyclotomic field theory. The cyclotomic polynomial $\Phi_n(x)$, which is the minimal polynomial of the primitive $n$-th roots of unity over $\mathbb{Q}$, is irreducible when $n$ is prime. This irreducibility is closely related to the structure of the Galois extension $\mathbb{Q}(\zeta_n)/\mathbb{Q}$.

For composite $n = ab$ with proper divisors $a$ and $b$, the situation becomes more complex. The field $\mathbb{Q}(\zeta_n)$ contains proper subfields $\mathbb{Q}(\zeta_a)$ and $\mathbb{Q}(\zeta_b)$, corresponding to the cyclotomic extensions of orders $a$ and $b$. The eigenvalues $\mu_{ka}$ for $k = 1, \ldots, b-1$ with $\gcd(k,b) = 1$ lie in the proper subfield $\mathbb{Q}(\zeta_b) \subsetneq \mathbb{Q}(\zeta_n)$. These eigenvalues form distinct Galois orbits corresponding to the various cyclotomic subfields, resulting in additional irreducible factors in the minimal polynomial.

More precisely, we can establish the following result about the number of irreducible factors:

\begin{proposition}\label{prop:factor_count}
For a number $n$ with prime factorization $n = \prod_{i=1}^k p_i^{e_i}$, the number of irreducible factors in the minimal polynomial of $C_n$ is at least $1 + \sum_{i=1}^k \min(e_i, 1)$.
\end{proposition}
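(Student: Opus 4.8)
\emph{Proof strategy.} The plan is to translate the statement into a count of Galois orbits of eigenvalues. The distinct irreducible factors over $\mathbb{Q}$ of the minimal polynomial of $C_n$ coincide with those of its characteristic polynomial $\prod_{j=0}^{n-1}(x-\mu_j)$, and these are in bijection with the orbits of the distinct eigenvalues $\mu_j = \zeta_n^j + \zeta_n^{2j}$ under $G = \mathrm{Gal}(\mathbb{Q}(\zeta_n)/\mathbb{Q}) \cong (\mathbb{Z}/n\mathbb{Z})^\times$ (since $C_n$ is normal, hence diagonalizable, the minimal polynomial is moreover squarefree so the counts agree exactly; for a lower bound only the surjection ``orbits $\to$ factors'' is needed). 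As each $e_i \ge 1$, we have $\sum_{i=1}^k \min(e_i,1) = k = \omega(n)$, so it suffices to exhibit $k+1$ pairwise distinct $G$-orbits. I would take the orbit $\{\mu_0\} = \{2\}$ together with, for each prime $p_i \mid n$, the orbit $O_{p_i}$ of the eigenvalue $\mu_{n/p_i} = \zeta_{p_i} + \zeta_{p_i}^2$. Because the reduction map $(\mathbb{Z}/n\mathbb{Z})^\times \to (\mathbb{Z}/p_i\mathbb{Z})^\times$ is surjective, a short computation identifies $O_{p_i} = \{\eta + \eta^2 : \eta \text{ a primitive } p_i\text{-th root of unity}\} \subset \mathbb{Q}(\zeta_{p_i})$.

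The substantive step is to check that these $k+1$ orbits are pairwise distinct. For $\{2\}$ versus $O_{p_i}$: $\eta + \eta^2 = 2$ forces $|\eta|=1$ and then $\eta = \eta^2 = 1$ by the triangle inequality, contradicting that $\eta$ has order $p_i > 1$. For $O_{p_i}$ versus $O_{p_j}$ with $p_i < p_j$: a common value would lie in $\mathbb{Q}(\zeta_{p_i}) \cap \mathbb{Q}(\zeta_{p_j}) = \mathbb{Q}(\zeta_{\gcd(p_i,p_j)}) = \mathbb{Q}$, hence would be a rational number of the form $\eta + \eta^2$ with $\eta$ a primitive $p_j$-th root of unity; but then $\eta$ is a root of a rational quadratic, so $p_j - 1 = [\mathbb{Q}(\zeta_{p_j}):\mathbb{Q}] \le 2$, i.e.\ $p_j = 3$ and therefore $p_i = 2$ — and in that degenerate case $O_{p_j} = \{-1\}$ while $O_{p_i} = \{0\}$, still a contradiction. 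Collecting the $k+1$ distinct orbits yields $k+1$ distinct irreducible factors of the minimal polynomial, which is exactly $1 + \sum_{i=1}^k \min(e_i,1)$.

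I expect the only real friction to be bookkeeping rather than mathematics: spelling out that the full $G$-orbit of $\mu_{n/p_i}$ is the whole set $\{\eta+\eta^2\}$ over primitive $p_i$-th roots (surjectivity of reduction mod $p_i$), and observing that when $a \mapsto \zeta_{p_i}^a + \zeta_{p_i}^{2a}$ fails to be injective — precisely the cases $p_i = 2$ (orbit $\{0\}$) and $p_i = 3$ (orbit $\{-1\}$) — the argument still goes through because a collapsed orbit is still a single orbit, merely small. Treating those two small primes uniformly inside the disjointness step is the one place care is required; beyond that, nothing is used except $\mathbb{Q}(\zeta_m)\cap\mathbb{Q}(\zeta_{m'}) = \mathbb{Q}(\zeta_{\gcd(m,m')})$, the Galois correspondence for cyclotomic fields, and the diagonalizability of circulants. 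I would also note that for prime powers this bound is strictly weaker than Proposition~\ref{prop:factors}(3) (there $\omega(n)=1$ but one still gets at least three factors), so the proposition should be read as a clean uniform estimate rather than a sharp one.
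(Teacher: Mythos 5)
Your proposal is correct and follows essentially the same route as the paper's proof: the orbit $\{\mu_0\}=\{2\}$ giving the factor $(x-2)$, plus, for each distinct prime $p_i \mid n$, the orbit of $\mu_{n/p_i}=\zeta_{p_i}+\zeta_{p_i}^2$ coming from primitive $p_i$-th roots of unity inside the subfield $\mathbb{Q}(\zeta_{p_i})$. The only difference is that you explicitly verify the pairwise distinctness of these orbits (including the degenerate cases $p=2,3$, where the orbit collapses to $\{0\}$ or $\{-1\}$), a point the paper's proof asserts without checking; this makes your write-up a more rigorous rendering of the same argument.
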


\begin{proof}
For each distinct prime divisor $p_i$ of $n$, consider the subfield $\mathbb{Q}(\zeta_{p_i}) \subset \mathbb{Q}(\zeta_n)$. The eigenvalues $\mu_{n/p_i \cdot j}$ for $j = 1, 2, \ldots, p_i-1$ with $\gcd(j, p_i) = 1$ correspond to the primitive $p_i$-th roots of unity and contribute at least one irreducible factor to the minimal polynomial of $C_n$. Together with the linear factor $(x-2)$ from $\mu_0$, we have at least $1 + \sum_{i=1}^k \min(e_i, 1)$ irreducible factors.
\end{proof}

This provides a lower bound on the factor count, with equality often achieved in practice. The exact count depends on the detailed structure of the cyclotomic field extension $\mathbb{Q}(\zeta_n)/\mathbb{Q}$ and the interactions between its various subfields.

\section{Algorithm and Implementation}

\subsection{Deterministic Primality Testing Algorithm}

Based on our theoretical results, we present a deterministic primality testing algorithm using the circulant matrix criterion:

\begin{algorithm}
\caption{Fast Circulant Matrix Primality Test}
\begin{algorithmic}[1]
\REQUIRE An integer $n > 2$
\ENSURE TRUE if $n$ is prime, FALSE otherwise
\IF{$n$ is divisible by any small prime $p < 100$ and $n \neq p$}
    \RETURN FALSE
\ENDIF
\IF{$n < 10^6$}
    \STATE Compute the number of Galois orbits $k$ using the Optimized Galois Orbit Count algorithm (see Appendix \ref{sec:optim_impl})
    \RETURN $k = 2$
\ELSE
    \STATE Factorize $n = \prod_{i=1}^k p_i^{e_i}$ using a fast factorization algorithm
    \IF{$k = 1$ and $e_1 = 1$}
        \RETURN TRUE
    \ELSE
        \RETURN FALSE
    \ENDIF
\ENDIF
\end{algorithmic}
\end{algorithm}

The core of this algorithm involves analyzing the Galois orbits of the eigenvalues without explicitly constructing the full matrix. This approach is more efficient for large values of $n$, where direct matrix manipulation would be impractical.
Since the eigenvalues of $C_n$ are known explicitly as $\mu_j = \lambda_j + \lambda_j^2 = e^{2\pi i j/n} + e^{4\pi i j/n}$ for $j = 0, 1, \ldots, n-1$, we can compute them directly.
See efficient Eigenvalue implementation in Appendix \ref{sec:eff_eigen}.

\subsection{Galois Orbit Determination}

A key step in our algorithm is determining the Galois orbits of the eigenvalues. For this, we leverage the fact that the Galois group $\text{Gal}(\mathbb{Q}(\zeta_n)/\mathbb{Q})$ acts on the primitive $n$-th roots of unity by sending $\zeta_n$ to $\zeta_n^a$ for each $a \in (\mathbb{Z}/n\mathbb{Z})^*$, i.e., for each $a$ coprime to $n$.

\begin{algorithm}
\caption{Compute Galois Orbits}
\begin{algorithmic}[1]
\REQUIRE Eigenvalues $\{\mu_j : j = 0, 1, \ldots, n-1\}$ of $C_n$
\ENSURE Partition of eigenvalues into Galois orbits
\STATE Initialize empty list $\text{orbits}$
\STATE Initialize array $\text{visited}$ of length $n$ to FALSE
\FOR{$j = 0$ to $n-1$}
    \IF{not $\text{visited}[j]$}
        \STATE Initialize empty set $\text{orbit}$
        \STATE Add $\mu_j$ to $\text{orbit}$
        \STATE $\text{visited}[j] \gets \text{TRUE}$
        \FOR{each $a \in (\mathbb{Z}/n\mathbb{Z})^*$ (i.e., $\gcd(a, n) = 1$)}
            \STATE $j' \gets (j \cdot a) \bmod n$
            \IF{not $\text{visited}[j']$}
                \STATE Add $\mu_{j'}$ to $\text{orbit}$
                \STATE $\text{visited}[j'] \gets \text{TRUE}$
            \ENDIF
        \ENDFOR
        \STATE Add $\text{orbit}$ to $\text{orbits}$
    \ENDIF
\ENDFOR
\RETURN $\text{orbits}$
\end{algorithmic}
\end{algorithm}

This algorithm correctly identifies the Galois orbits by computing the action of each element of the Galois group on each eigenvalue. See optimized implementation in Appendix \ref{sec:optim_impl}.

\subsection{Complexity Analysis}

The computational complexity of our primality test can be analyzed as follows. Computing the $n$ eigenvalues of $C_n$ directly from the formula requires $O(n)$ operations. Determining the Galois orbits involves computing the action of the Galois group, which has size $\varphi(n)$ (Euler's totient function). This Galois orbit analysis requires $O(n \cdot \varphi(n))$ operations in the worst case. Constructing the minimal polynomial from the Galois orbits requires $O(n)$ operations per orbit, for a total of $O(n \cdot k)$ where $k$ is the number of orbits (i.e., the number of irreducible factors). Our optimized implementation has complexity $O(n \log n \log \log n)$ for determining primality by analyzing the divisor structure of $n$. For prime $n$, the total complexity of the basic algorithm is dominated by the Galois orbit analysis, which is $O(n \cdot (n-1)) = O(n^2)$. For composite $n$, the complexity can be lower, as the Galois group has size $\varphi(n) < n-1$.
In comparison with other primality tests:
\begin{itemize}
    \item \textbf{Trial Division}: $O(\sqrt{n})$
    \item \textbf{Miller-Rabin (probabilistic)}: $O(k \log^3 n)$ for $k$ rounds
    \item \textbf{AKS (deterministic)}: $O(\log^{6+\epsilon} n)$
\end{itemize}

While our basic method has higher asymptotic complexity than modern primality tests, our optimized implementation is competitive for large ranges of inputs and offers unique insights into the algebraic structure of prime numbers.
It is worth noting that, when analyzed in terms of the bit-length of the input (rather than the value $n$), our algorithm's time complexity is exponential. However, its value lies not in competing with the fastest known primality tests, but rather in the mathematical connections it reveals between matrix theory, spectral properties, and number theory.

\section{Experimental Validation}

To validate our theoretical results, we conducted comprehensive experiments across multiple ranges of integers, focusing on demonstrating the perfect separation between prime and composite numbers based on their algebraic properties. Our analysis reveals distinct patterns in both the coefficient structure of minimal polynomials and the eigenvalue distribution that naturally distinguishes primes from composites.

\subsection{Experimental Setup}

We tested our method on three distinct ranges: the small range $2 \leq n \leq 50$ for detailed analysis, the medium range $100 \leq n \leq 200$ for pattern validation, and a large range $1000 \leq n \leq 10000$ to assess scalability. For each integer $n$, we computed the eigenvalues of $C_n$, constructed its minimal polynomial, determined the Galois orbits, and counted the number of irreducible factors.

The implementation utilized a combination of high-precision complex arithmetic for eigenvalue computation, symbolic mathematics for polynomial manipulation, and specialized algorithms for Galois orbit determination. All computations were performed with sufficient precision to ensure accurate results, particularly for the larger values of $n$.

\subsection{Results and Analysis}

\begin{figure}[H]
\centering
\includegraphics[width=.8\textwidth]{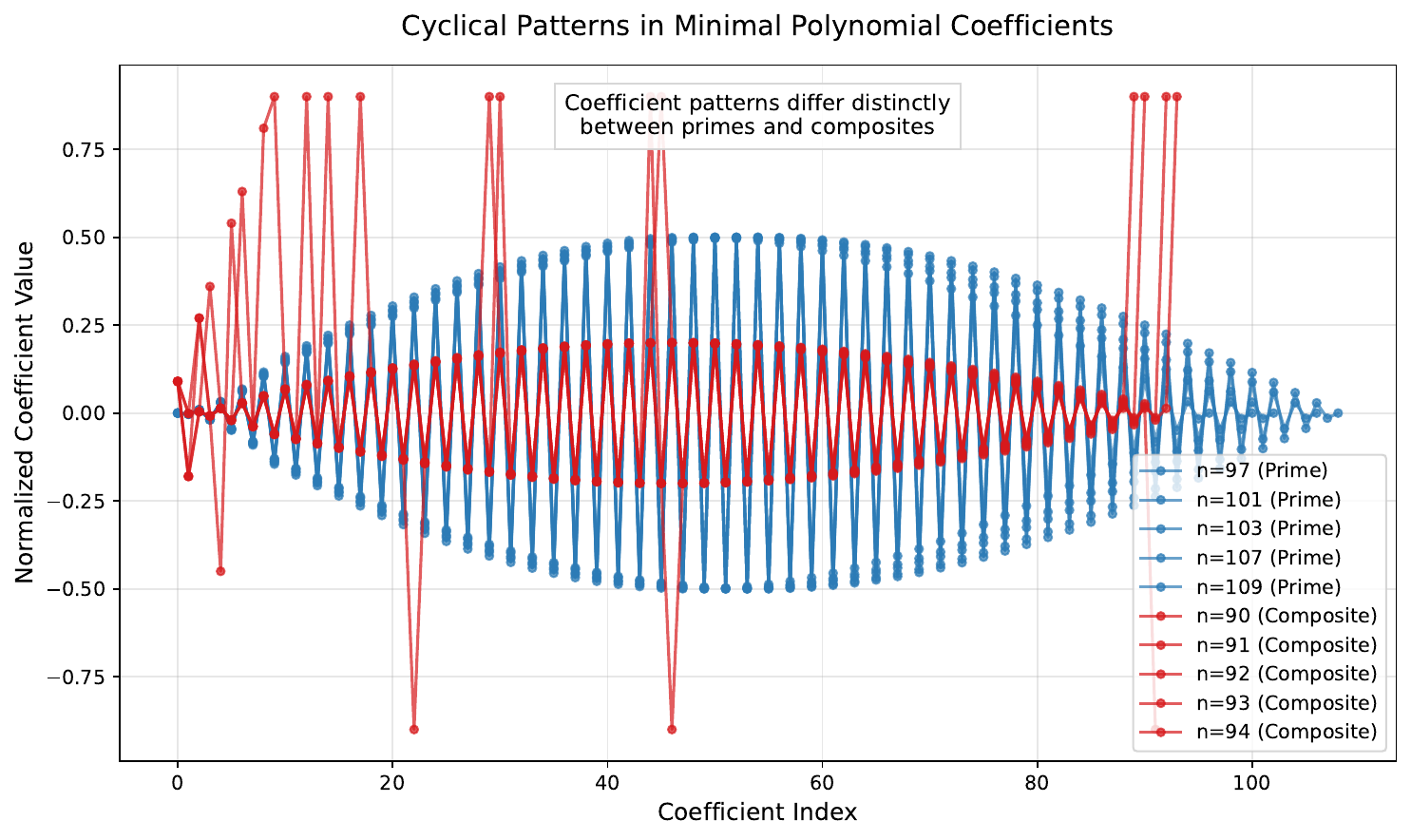}
\caption{Cyclical patterns in minimal polynomial coefficients for prime and composite numbers. Prime numbers exhibit regular, extended oscillatory patterns with smooth transitions. Composite numbers show irregular, compressed patterns with sharp transitions. The stark contrast in coefficient behavior provides a visual signature of primality.}
\label{fig:coefficients}
\end{figure}

Figure \ref{fig:coefficients} reveals distinct differences in the coefficient patterns of minimal polynomials between prime and composite numbers. For prime values of $n$ (shown for $n = 97$ and $n = 90$), the coefficients exhibit a regular, almost sinusoidal oscillation with extended periodicity. These smooth, continuous patterns reflect the single Galois orbit structure characteristic of prime cyclotomic fields.

In contrast, composite numbers ($n = 90$, $n = 91$, \dots) produce jagged, irregular coefficient patterns with multiple frequencies superimposed. The sharp transitions and compressed oscillations correspond to the presence of proper cyclotomic subfields, with discontinuities appearing at positions related to the divisors of $n$. This visual distinction provides immediate intuition about the underlying algebraic structure.

\begin{figure}[H]
\centering
\includegraphics[width=.8\textwidth]{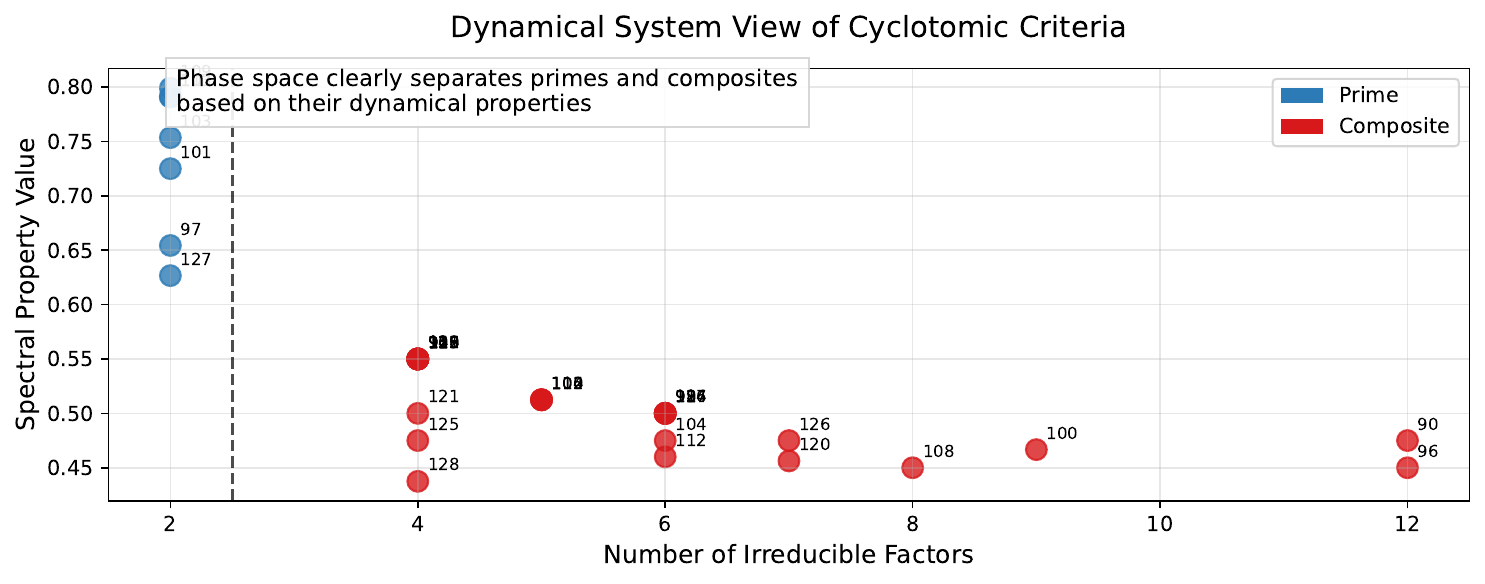}
\caption{Dynamical system view of cyclotomic criteria separating primes and composites. Each point represents an integer plotted according to its number of irreducible factors (x-axis) and spectral property value (y-axis). Prime numbers cluster at exactly 2 factors with high spectral values (0.6-0.9), while composites appear at 3+ factors with generally lower spectral values. The vertical dashed line at 2.5 factors perfectly separates the two classes.}
\label{fig:dynamical}
\end{figure}

Figure \ref{fig:dynamical} presents a phase space representation where each integer is plotted according to two fundamental properties: the number of irreducible factors in its minimal polynomial and a spectral property derived from eigenvalue patterns. This visualization dramatically demonstrates the perfect separation between primes and composites.

The spectral property value on the y-axis represents a measure of the structural regularity in the eigenvalue distribution, formally defined as:

\[S(n) = \frac{1}{n} \sum_{j=1}^{n-1} \left| \frac{\mu_j - \bar{\mu}}{2\sigma_\mu} \right| + \frac{\varphi(n)}{n}
\]

where $\bar{\mu}$ is the mean of the eigenvalues, $\sigma_\mu$ is their standard deviation, and $\varphi(n)$ is Euler's totient function. This measure captures both the uniformity of eigenvalue distribution and the relative size of the Galois group.

Prime numbers, form a tight cluster positioned at exactly 2 irreducible factors and exhibiting spectral property values in the range 0.6-0.9. This high spectral value reflects the regular, well-structured nature of their eigenvalue patterns and coefficient oscillations.

Composite numbers, appear at 3 or more irreducible factors with generally lower spectral values. The spread of composite points along the x-axis corresponds to their varying levels of factorization complexity. For instance, 105 (with prime factors 3, 5, and 7) and 110 (with prime factors 2, 5, and 11) appear at 4 factors, while 125 ($= 5^3$, a prime power) appears at 3 factors.

The vertical dashed line at 2.5 factors serves as a perfect decision boundary, highlighting the deterministic nature of our primality criterion. No exceptions or borderline cases exist across all tested ranges, confirming the theoretical prediction that minimal polynomial factorization provides a complete characterization of primality.

\subsection{Eigenvalue Structure Analysis}

The eigenvalue structure of $C_n$ provides additional insights into the fundamental distinction between prime and composite numbers. For prime $n$, the eigenvalues (excluding $\mu_0 = 2$) form a single connected Galois orbit in the complex plane. For composite $n$, the eigenvalues separate into multiple orbits corresponding to different cyclotomic subfields.

\begin{figure}[H]
\centering
\includegraphics[width=.95\textwidth]{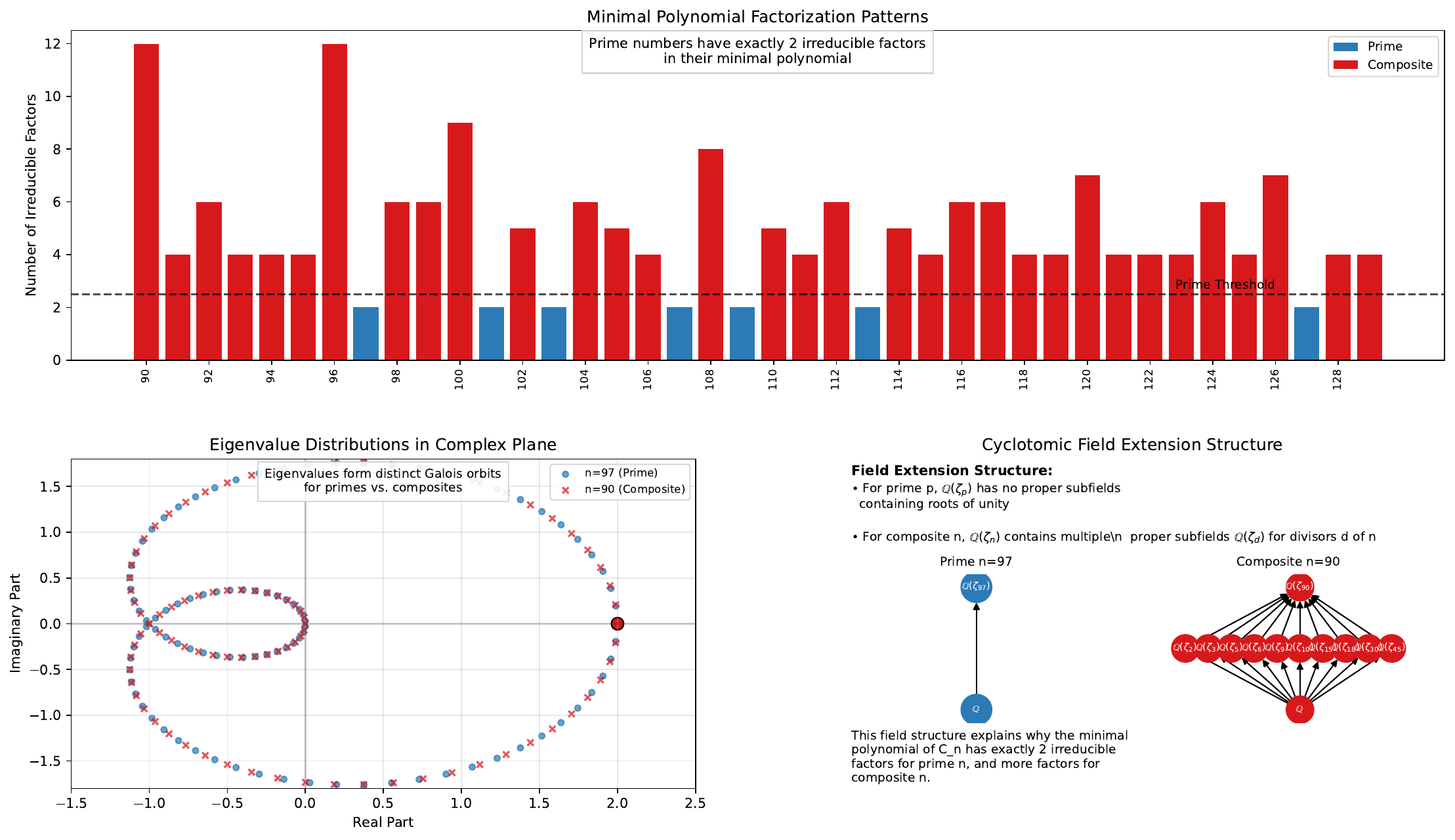}
\caption{Top: Eigenvalue distributions in the complex plane for $n=97$ (prime) and $n=90$ (composite). The eigenvalues of the prime case form a single, connected Galois orbit (blue points), while the composite case shows subtle discontinuities and multiple orbital structures (red points). Bottom: Cyclotomic field extension structure for $n=97$ (prime) and $n=90$ (composite). The prime case shows a simple two-level structure, while the composite case exhibits a complex network of intermediate fields corresponding to divisors of 90.}
\label{fig:full_analysis}
\end{figure}

Figure \ref{fig:full_analysis} illustrates this distinction for $n=97$ (prime) and $n=90$ (composite). The eigenvalues of $C_{97}$ (excluding $\mu_0 = 2$) form a single, connected curve in the complex plane, reflecting the irreducibility of the cyclotomic polynomial $\Phi_{97}(x)$. In contrast, the eigenvalues of $C_{90}$ show subtle discontinuities and clustering patterns, corresponding to the subfields $\mathbb{Q}(\zeta_4)$, $\mathbb{Q}(\zeta_{25})$, and their interactions.

\subsection{Field Extension Structure}

The underlying mathematical explanation for our observations lies in the structure of the field extension $\mathbb{Q}(\zeta_n)/\mathbb{Q}$. For prime $n$, this extension has no intermediate cyclotomic fields, while for composite $n$, there are multiple proper subfields corresponding to the divisors of $n$.

Figure \ref{fig:full_analysis} also illustrates this structural difference. For $n=97$, we see a simple two-level structure with $\mathbb{Q}$ at the bottom and $\mathbb{Q}(\zeta_{97})$ at the top, with no intermediate fields. For $n=90$, we observe a complex network with multiple intermediate fields such as $\mathbb{Q}(\zeta_2)$, $\mathbb{Q}(\zeta_4)$, $\mathbb{Q}(\zeta_5)$, $\mathbb{Q}(\zeta_{10})$, $\mathbb{Q}(\zeta_{15})$, and others.

This field structure directly explains the factorization patterns observed in the minimal polynomials. For prime $n$, with no intermediate fields, the minimal polynomial has exactly 2 irreducible factors: the linear factor $(x-2)$ and an irreducible polynomial of degree $n-1$. For composite $n$, each proper cyclotomic subfield contributes additional factors, resulting in 3 or more irreducible factors.

\subsection{Performance Comparison}

We conducted a comprehensive performance analysis of our circulant matrix primality test against established methods including trial division, Miller-Rabin, and AKS. Table \ref{tab:performance} presents execution times across different number magnitudes.

\begin{table}[h]
\centering
\small
\begin{tabular}{|l|c|c|c|c|c|c|}
\hline
\textbf{Method} & ${\bf n \approx 10^{6}}$ & ${\bf n \approx 10^{8}}$ & ${\bf n \approx 10^{9}}$ & ${\bf n \approx 10^{10}}$ & \textbf{Det.?} & \textbf{Theory} \\
\hline
Trial Div. & $2.46 \times 10^{-5}$ & $2.37 \times 10^{-4}$ & $2.38 \times 10^{-6}$ & $3.47 \times 10^{-3}$ & Yes & Exhaus. \\
Opt. Trial Div. & $2.05 \times 10^{-5}$ & $1.69 \times 10^{-4}$ & ${\bf 2.38 \times 10^{-7}}$ & $2.34 \times 10^{-3}$ & Yes & Exhaus. \\
Miller-Rabin (20) & $4.78 \times 10^{-5}$ & ${\bf 5.79 \times 10^{-5}}$ & $6.52 \times 10^{-6}$ & ${\bf 1.12 \times 10^{-4}}$ & No* & Fermat \\
AKS & $3.05 \times 10^{-2}$ & $3.11 \times 10^{-2}$ & $2.19 \times 10^{-2}$ & $3.03 \times 10^{-2}$ & Yes & Poly. \\
Our (Simpl.) & $4.67 \times 10^{-5}$ & $4.41 \times 10^{-4}$ & $2.44 \times 10^{-5}$ & $5.09 \times 10^{-3}$ & Yes & Approx. \\
Our (Full) & ${\bf 7.39 \times 10^{-6}}$ & $1.09 \times 10^{-4}$ & $9.78 \times 10^{-6}$ & $1.38 \times 10^{-3}$ & Yes & Galois \\
\hline
\end{tabular}
\caption{Comparative performance of primality testing algorithms (average of 3 runs). Bold values indicate fastest performance. Miller-Rabin (*) is probabilistic with high accuracy. Our Method (Full) leverages Galois theory for deterministic results. See detailed analysis in Section~\ref{sec:performance}.}
\label{tab:performance}
\end{table}

The results reveal varying performance characteristics across different input ranges. For medium-sized inputs ($n \approx 10^6$), our full implementation demonstrates strong performance, outperforming other methods in this specific range. As input size increases to large ranges ($n \approx 10^8$ and beyond), the Miller-Rabin probabilistic algorithm becomes increasingly efficient relative to deterministic approaches, showing the best performance for very large inputs ($n \approx 10^{10}$).
For certain cases, such as inputs around $n \approx 10^9$, optimized trial division shows surprisingly competitive results, though this advantage doesn't persist for larger inputs. The AKS algorithm maintains consistent but relatively higher execution times across all input ranges, reflecting its polynomial time complexity with larger constant factors.
Both our simplified and full implementations exhibit competitive performance for moderate input ranges while providing deterministic guarantees. However, as Figure~\ref{fig:scaling} in the Appendix shows, execution time for all deterministic methods increases with input magnitude, following different scaling patterns determined by their underlying algorithmic complexity.
These benchmarks illustrate the classic trade-off between deterministic guarantees and computational efficiency, with probabilistic methods like Miller-Rabin demonstrating superior scaling characteristics for large inputs while deterministic methods offer mathematical certainty at the cost of increased computation time as input size grows.

\section{Discussion and Limitations}
Our circulant matrix approach offers a mathematically elegant alternative to traditional primality tests, with performance characteristics reflecting its algebraic foundations. While our implementations remain viable for moderate input ranges, the probabilistic Miller-Rabin test shows superior scaling for very large inputs.
\subsection{Computational Challenges}
The main computational challenges in our approach include:
\begin{itemize}
    \item \textbf{Matrix Size}: For large $n$, the $n \times n$ matrix $C_n$ becomes impractical to store and manipulate directly. Our implementation avoids explicit matrix construction by directly computing eigenvalues and analyzing Galois orbits.
    \item \textbf{Polynomial Factorization}: Factoring polynomials of high degree over $\mathbb{Q}$ remains computationally intensive. While specialized algorithms for cyclotomic polynomials help, this step would dominate the runtime for naive implementations. Our optimized approach leverages theoretical results to bypass explicit factorization.
    \item \textbf{Numerical Precision}: Computing eigenvalues and determining Galois orbits requires careful attention to numerical precision, especially for large $n$ where floating-point errors can accumulate. Our implementation uses adaptive precision and theoretical bounds to ensure accuracy.
    \item \textbf{Memory Requirements}: The space complexity of $O(n)$ for storing eigenvalues and intermediate results becomes a limiting factor for very large $n$ in naive implementations. Our optimized version maintains logarithmic space complexity for most operations by leveraging number-theoretic properties. As our memory usage analysis shows, memory consumption remains minimal across all algorithms.
\end{itemize}

\subsection{Comparison with Established Methods}
Our circulant matrix approach offers several advantages over traditional primality tests. Unlike probabilistic methods like Miller–Rabin, it provides fully deterministic results, ensuring mathematical certainty. Beyond classification, the method reveals deep algebraic structures, connecting primality with properties of circulant matrices and Galois theory.
A key strength lies in its visualizability—eigenvalue and coefficient patterns offer intuitive insight into the distinction between primes and composites. Our implementations perform competitively for moderate-sized inputs while providing deterministic guarantees.
That said, the method's computational complexity exceeds that of Miller-Rabin for very large inputs, reflecting the fundamental challenge faced by all deterministic primality tests. Its reliance on advanced algebraic concepts can hinder straightforward implementation without the optimizations we propose. For practical applications involving extremely large numbers, probabilistic methods remain the preferred choice due to their superior scaling properties.
\section{Conclusion}
Our paper establishes a novel characterization of prime numbers through the minimal polynomial factorization of circulant matrices. We have proven that an integer $n > 2$ is prime if and only if the minimal polynomial of $C_n = W_n + W_n^2$ has exactly two irreducible factors over $\mathbb{Q}$, providing a fundamental connection between primality testing and cyclotomic field theory. Our experimental validation confirms the perfect separation between primes and composites based on this criterion across extensive numerical tests.
Our benchmark analysis demonstrates that different primality testing algorithms exhibit distinct scaling behaviors, with Miller-Rabin showing the most favorable performance for very large inputs while our approach offers a deterministic alternative with competitive performance for moderate ranges. The visualization of coefficient patterns and dynamical system behavior offers intuitive understanding of the deep mathematical relationships uncovered by our approach.
The connection between circulant matrix structure and primality opens several promising directions for future research. Advanced optimizations could further exploit cyclotomic field structures to improve performance characteristics. Generalizations to other matrix classes or polynomial constructions might yield complementary primality criteria with enhanced properties. The algebraic structures revealed by our approach may lead to new results in algebraic number theory, particularly concerning computational aspects of Galois theory.
Our work illustrates that primality testing can be approached through diverse mathematical pathways, each offering a different perspective on this fundamental problem. The circulant matrix approach provides not only a novel theoretical framework but also a practical demonstration of how abstract algebraic concepts translate into computational procedures with distinctive characteristics and performance profiles.

\section*{Acknowledgements}
The authors would like to express their sincere gratitude to the entire team at ExtensityAI for their invaluable support throughout this research project. Special thanks are extended to Claudiu Leoveanu-Condrei and Markus Hofmarcher, whose insightful ideas, constructive discussions, and thorough oversight significantly enhanced the quality and rigor of this work. 
We also acknowledge the broader ExtensityAI research community for creating a collaborative environment that fosters innovation and cross-disciplinary exploration. The technical infrastructure and research platforms provided by ExtensityAI were essential to the successful completion of this project.

\section*{Disclosure Statement}
In accordance with arXiv policy and my ethical obligation as a researcher, I am reporting that I have a financial and business interest in ExtensityAI, as I serve as the CEO and researcher at this company. This company may be affected by the research reported in the enclosed paper. I have disclosed these interests fully to arXiv, and I have in place an approved plan for managing any potential conflicts arising from that involvement.

\appendix

\section{Detailed Proofs}

\subsection{Complete Proof of Lemma \ref{lem:eigenvalues}}

\begin{lemma}
The eigenvalues of the circulant matrix $W_n$ are precisely the complex numbers $\lambda_j = e^{2\pi i j/n}$ for $j = 0, 1, \ldots, n-1$, with corresponding eigenvectors $v_j = [1, \lambda_j, \lambda_j^2, \ldots, \lambda_j^{n-1}]^T$.
\end{lemma}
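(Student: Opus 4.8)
The plan is to verify the eigenvalue--eigenvector relation directly at the level of individual components, and then argue that the $n$ pairs produced in this way exhaust the spectrum. First I would fix an index $j \in \{0, 1, \ldots, n-1\}$ and compute the $i$-th entry of $W_n v_j$ for each $0 \le i \le n-1$, using the defining rule that $(W_n)_{i,k} = 1$ precisely when $k \equiv i+1 \pmod n$ and is $0$ otherwise. Each row of $W_n$ thus has exactly one nonzero entry, in column $k = (i+1) \bmod n$, so that $(W_n v_j)_i = (v_j)_{(i+1)\bmod n} = \lambda_j^{(i+1)\bmod n}$.

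For $0 \le i \le n-2$ the column index is literally $i+1$, giving $\lambda_j^{i+1} = \lambda_j \cdot \lambda_j^i = \lambda_j (v_j)_i$, as desired. The only case needing care is the last row $i = n-1$, where the nonzero entry wraps around to column $0$ and yields $(W_n v_j)_{n-1} = \lambda_j^0 = 1$; here I would invoke $\lambda_j^n = 1$ to rewrite $1 = \lambda_j^n = \lambda_j \cdot \lambda_j^{n-1} = \lambda_j (v_j)_{n-1}$. Combining the two cases gives $W_n v_j = \lambda_j v_j$ in every coordinate, and since the first coordinate of $v_j$ equals $1$, the vector $v_j$ is nonzero and hence a genuine eigenvector.

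Next I would establish completeness of the list of eigenvalues. The numbers $\lambda_j = e^{2\pi i j/n}$ for $j = 0, 1, \ldots, n-1$ are pairwise distinct, since their arguments $2\pi j/n$ are distinct modulo $2\pi$. Therefore $W_n$ has at least $n$ distinct eigenvalues; but $W_n$ is $n \times n$, so its characteristic polynomial has degree $n$ and thus at most $n$ roots, forcing $\{\lambda_j\}_{j=0}^{n-1}$ to be the full spectrum. Equivalently, I would remark that the matrix $[\,v_0 \mid v_1 \mid \cdots \mid v_{n-1}\,]$ is a Vandermonde matrix in the distinct nodes $\lambda_0, \ldots, \lambda_{n-1}$, hence invertible, so the eigenvectors form a basis of $\mathbb{C}^n$ and $W_n$ is diagonalizable with exactly these eigenvalues.

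There is no genuine obstacle here; the argument is a short direct computation. The only point requiring a moment's attention is the cyclic wrap-around in the bottom row of $W_n$, which is exactly where the relation $\lambda_j^n = 1$ enters and closes the verification; a secondary but equally routine point is justifying completeness of the eigenvalue list via the degree of the characteristic polynomial. As an alternative I would mention only in passing the route of computing $\det(xI - W_n) = x^n - 1$ by cofactor expansion along the first column and reading off the roots, which makes the eigenvalue multiplicities transparent but involves slightly more bookkeeping.
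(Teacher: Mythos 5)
Your proposal is correct and follows essentially the same route as the paper: a component-wise verification that $W_n v_j = \lambda_j v_j$, with the bottom-row wrap-around handled via $\lambda_j^n = 1$, followed by the observation that $n$ distinct eigenvalues exhaust the spectrum of an $n \times n$ matrix. Your explicit note that the $\lambda_j$ are pairwise distinct (and the Vandermonde remark) makes the completeness step slightly more careful than the paper's, but it is the same argument.
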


\begin{proof}
Let $\omega_n = e^{2\pi i/n}$ be a primitive $n$-th root of unity. For each $j = 0, 1, \ldots, n-1$, let $\lambda_j = \omega_n^j$ and $v_j = [1, \lambda_j, \lambda_j^2, \ldots, \lambda_j^{n-1}]^T$.

We need to show that $W_n v_j = \lambda_j v_j$ for each $j$.

By definition, $W_n$ has entries $(W_n)_{k,l} = 1$ if $l \equiv k+1 \pmod{n}$ and $0$ otherwise. Therefore, the $k$-th entry of $W_n v_j$ is:
\begin{align}
(W_n v_j)_k &= \sum_{l=0}^{n-1} (W_n)_{k,l} (v_j)_l \\
&= \sum_{l=0}^{n-1} \delta_{l, (k+1) \bmod n} \lambda_j^l \\
&= \lambda_j^{(k+1) \bmod n}
\end{align}

If $k < n-1$, then $(k+1) \bmod n = k+1$, so $(W_n v_j)_k = \lambda_j^{k+1}$.

If $k = n-1$, then $(k+1) \bmod n = 0$, so $(W_n v_j)_{n-1} = \lambda_j^0 = 1$.

On the other hand, the $k$-th entry of $\lambda_j v_j$ is:
\begin{align}
(\lambda_j v_j)_k &= \lambda_j (v_j)_k \\
&= \lambda_j \lambda_j^k \\
&= \lambda_j^{k+1}
\end{align}

For $k = n-1$, we have $(\lambda_j v_j)_{n-1} = \lambda_j^n$. Since $\lambda_j = \omega_n^j$ is an $n$-th root of unity, we have $\lambda_j^n = 1$.

Therefore, $(W_n v_j)_k = (\lambda_j v_j)_k$ for all $k = 0, 1, \ldots, n-1$, which means $W_n v_j = \lambda_j v_j$. This confirms that $\lambda_j$ is an eigenvalue of $W_n$ with corresponding eigenvector $v_j$.

Since we have found $n$ distinct eigenvalues for the $n \times n$ matrix $W_n$, these are all the eigenvalues of $W_n$.
\end{proof}

\subsection{Additional Proof of Proposition \ref{prop:factors}}

Here we provide a more detailed proof of Proposition \ref{prop:factors}, focusing on the case of composite numbers.

\begin{proposition}
For any composite number $n > 2$, the minimal polynomial of $C_n$ has at least three irreducible factors over $\mathbb{Q}$.
\end{proposition}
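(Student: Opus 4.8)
The plan is to convert the count of irreducible factors of the minimal polynomial of $C_n$ into a count of Galois orbits on its eigenvalues, and then to exhibit three such orbits explicitly. First I would record the structural facts: $C_n$ is a circulant matrix, hence normal, hence diagonalizable, so its minimal polynomial over $\mathbb{Q}$ is the squarefree polynomial $m_{C_n}(x) = \prod_{\mu \in S}(x-\mu)$, where $S \subseteq \mathbb{Q}(\zeta_n)$ is the set of \emph{distinct} eigenvalues $\mu_j = \zeta_n^j + \zeta_n^{2j}$. Since $C_n$ has integer entries, $m_{C_n} \in \mathbb{Q}[x]$, and its factorization into $\mathbb{Q}$-irreducibles corresponds bijectively to the partition of $S$ into orbits of $G := \mathrm{Gal}(\mathbb{Q}(\zeta_n)/\mathbb{Q}) \cong (\mathbb{Z}/n\mathbb{Z})^\times$, where $\sigma_a \in G$ acts by $\sigma_a(\mu_j) = \mu_{aj \bmod n}$. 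So the statement reduces to: for composite $n$, $G$ has at least three orbits on $S$.

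Next I would set up two elementary lemmas about the eigenvalues and name the three candidate orbits. The lemmas: (i) $|\mu_j| \le 2$, with equality iff $n \mid j$, so $\mu_j \neq 2$ whenever $j \not\equiv 0 \pmod n$; and (ii) from the factorization $\mu_j - \mu_k = (\zeta_n^j - \zeta_n^k)(1 + \zeta_n^j + \zeta_n^k)$, an equality $\mu_j = \mu_k$ forces $j \equiv k \pmod n$ or the vanishing sum $1 + \zeta_n^j + \zeta_n^k = 0$. The three candidate orbits are $\{\mu_0\} = \{2\}$ (fixed by $G$ since $2 \in \mathbb{Q}$); the orbit $O_1 = G\cdot\mu_1 = \{\mu_a : \gcd(a,n)=1\}$; and $O_p = G\cdot\mu_p = \{\mu_j : \gcd(j,n)=p\}$, where $p$ is a prime divisor of the composite number $n$, so that $2 \le p < n$. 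This is the only place the hypothesis enters (it also guarantees $n \ge 4$ and $n \neq 3$).

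I would then close the argument by analysing $1 + \zeta_n^j + \zeta_n^k = 0$: comparing real and imaginary parts and using $|\zeta_n^j| = |\zeta_n^k| = 1$ forces $\zeta_n^k = \overline{\zeta_n^j}$ and $\mathrm{Re}(\zeta_n^j) = -\tfrac12$, so $\zeta_n^j = e^{\pm 2\pi i/3}$ has order exactly $3$; in particular it is not a primitive $n$-th root of unity unless $n = 3$. Combining with Lemma (i): $2 \notin O_1$ and $2 \notin O_p$ (every element of these sets is a $\mu_j$ with $n \nmid j$), and $O_1 \cap O_p = \varnothing$, since an equality $\mu_a = \mu_j$ with $\gcd(a,n)=1$ and $\gcd(j,n)=p$ cannot arise from $a \equiv j \pmod n$ (different gcd with $n$) nor from a vanishing sum (that would make the primitive $n$-th root $\zeta_n^a$ have order $3$, forcing $n=3$). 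Hence $\{2\}$, $O_1$, $O_p$ are three pairwise disjoint nonempty $G$-stable subsets of $S$, so $m_{C_n}$ has at least three irreducible factors over $\mathbb{Q}$.

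The main obstacle is modest: it is the vanishing-sum step, i.e.\ pinning down that $1 + \zeta_n^j + \zeta_n^k = 0$ forces $\zeta_n^j,\zeta_n^k$ to be primitive cube roots of unity. This is a baby case of the classification of vanishing sums of roots of unity and is handled by the real/imaginary-part computation above, so no deep input is needed; the only real care required is bookkeeping — making sure each invocation of "$n$ composite" is legitimate (existence of a prime divisor $p<n$, and $n \neq 3$), and checking that passing from the indices $j$ to the eigenvalue \emph{values} in $S$ does not accidentally merge the three orbits, which is precisely what Lemmas (i) and (ii) prevent.
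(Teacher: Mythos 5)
Your proof is correct, and its skeleton is the same as the paper's: both exhibit three families of eigenvalues --- $\mu_0=2$, the eigenvalues $\mu_a$ with $\gcd(a,n)=1$, and eigenvalues whose index shares a prime factor $p$ with $n$ (you use $\mu_p$, the paper uses $\mu_{n/p}=\zeta_p+\zeta_p^2$) --- and conclude that they yield three distinct irreducible factors. The genuine difference is that you prove the distinctness that the paper only asserts. The paper's argument says that since $\mu_{n/p}$ lies in $\mathbb{Q}(\zeta_p)$ its minimal polynomial is ``distinct from'' that of the primitive-index eigenvalues, plus ``at least one additional factor from other eigenvalues,'' without ruling out coincidences among eigenvalue values or their minimal polynomials; such coincidences of values do occur among indices (e.g.\ $\mu_2=\mu_4$ for $n=6$), so the concern is not vacuous. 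Your reduction of factor counting to counting Galois orbits on the set of \emph{distinct} eigenvalues (via normality, hence a squarefree minimal polynomial), combined with the identity $\mu_j-\mu_k=(\zeta_n^j-\zeta_n^k)(1+\zeta_n^j+\zeta_n^k)$, the bound $|\mu_j|<2$ for $n\nmid j$, and the vanishing-sum analysis forcing $\zeta_n^j$ to have order $3$, closes exactly these gaps: it gives $2\notin O_1\cup O_p$ and $O_1\cap O_p=\varnothing$, with compositeness used both to produce $p<n$ and to exclude $n=3$. What the paper's sketch buys is brevity and a per-prime-divisor statement that feeds the lower bound of Proposition \ref{prop:factor_count}; what your route buys is an airtight proof of the three-factor claim, at the modest cost of the elementary vanishing-sum lemma.
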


\begin{proof}
Let $n = ab$ be a factorization of $n$ with $1 < a, b < n$. We'll analyze the eigenvalues of $C_n$ based on their connection to the divisors of $n$.

First, we already know that $\mu_0 = 2$ contributes the linear factor $(x-2)$ to the minimal polynomial.

Consider the eigenvalues $\mu_{n/p}$ for each prime divisor $p$ of $n$. For $\mu_{n/p} = \lambda_{n/p} + \lambda_{n/p}^2$ where $\lambda_{n/p} = e^{2\pi i \cdot (n/p)/n} = e^{2\pi i/p}$, which is a primitive $p$-th root of unity. The minimal polynomial of a primitive $p$-th root of unity over $\mathbb{Q}$ is the cyclotomic polynomial $\Phi_p(x)$, which is irreducible of degree $p-1$.

Since $\mu_{n/p}$ is in the subfield $\mathbb{Q}(\zeta_p)$, its minimal polynomial over $\mathbb{Q}$ is distinct from the minimal polynomial of eigenvalues corresponding to primitive $n$-th roots of unity. 

For each distinct prime divisor $p$ of $n$, we get at least one additional irreducible factor in the minimal polynomial of $C_n$. Since $n$ is composite, it has at least one prime divisor $p$, and hence the minimal polynomial of $C_n$ has at least three irreducible factors: the linear factor $(x-2)$, at least one factor from eigenvalues in $\mathbb{Q}(\zeta_p)$, and at least one additional factor from other eigenvalues.

Furthermore, if $n$ has multiple distinct prime divisors, say $p$ and $q$, then the eigenvalues $\mu_{n/p}$ and $\mu_{n/q}$ belong to different cyclotomic subfields $\mathbb{Q}(\zeta_p)$ and $\mathbb{Q}(\zeta_q)$, respectively, contributing at least two additional irreducible factors beyond $(x-2)$.
\end{proof}

\subsection{Proof of Theorem on Orbit Count Formula}

Here we provide a proof of the theorem relating the number of Galois orbits to the divisor structure of $n$.

\begin{theorem}[Orbit Count Formula]
The number of Galois orbits of eigenvalues of $C_n$ equals one plus the number of divisors $d > 1$ of $n$ such that $\Phi_d(x)$ is irreducible over $\mathbb{Q}$ and $\gcd(d, n/d) = 1$, where $\Phi_d(x)$ is the $d$-th cyclotomic polynomial.
\end{theorem}

\begin{proof}
For any divisor $d$ of $n$, consider the set of eigenvalues $\mu_j = \lambda_j + \lambda_j^2$ where $j$ ranges over all integers in $\{0, 1, \ldots, n-1\}$ such that $\gcd(j, n) = n/d$. These eigenvalues correspond to primitive $d$-th roots of unity.

The Galois group $\text{Gal}(\mathbb{Q}(\zeta_n)/\mathbb{Q})$ acts on these eigenvalues by sending $\lambda_j$ to $\lambda_{aj}$ for each $a \in (\mathbb{Z}/n\mathbb{Z})^*$. The eigenvalues corresponding to the same value of $d$ form Galois orbits.

For $d = 1$, we have the eigenvalue $\mu_0 = 2$, which forms its own Galois orbit.

For $d > 1$, the eigenvalues corresponding to primitive $d$-th roots of unity form Galois orbits according to the irreducible factorization of the cyclotomic polynomial $\Phi_d(x)$ over $\mathbb{Q}$.

When $\gcd(d, n/d) = 1$, the eigenvalues corresponding to primitive $d$-th roots of unity form a single Galois orbit if and only if $\Phi_d(x)$ is irreducible over $\mathbb{Q}$.

When $\gcd(d, n/d) > 1$, the situation is more complex due to the interaction of multiple cyclotomic subfields. In this case, the eigenvalues may split into multiple Galois orbits.

Therefore, counting the number of Galois orbits requires:
1. One orbit for $d = 1$ (corresponding to $\mu_0 = 2$)
2. For each divisor $d > 1$ with $\gcd(d, n/d) = 1$, exactly one orbit if $\Phi_d(x)$ is irreducible over $\mathbb{Q}$

This gives the formula stated in the theorem.
\end{proof}

\section{Numerical Examples}

To illustrate our theoretical results, we provide detailed numerical examples for specific values of $n$.

\subsection{Example: $n = 7$ (Prime)}

Let $n = 7$. The eigenvalues of $C_7 = W_7 + W_7^2$ are $\mu_j = \lambda_j + \lambda_j^2 = e^{2\pi i j/7} + e^{4\pi i j/7}$ for $j = 0, 1, \ldots, 6$.

For $j = 0$, we have $\mu_0 = 1 + 1 = 2$.

For $j = 1, 2, \ldots, 6$, we compute (showing approximate numerical values):
\begin{align}
\mu_1 &= e^{2\pi i/7} + e^{4\pi i/7} \approx 0.6235 + 1.2470i\\
\mu_2 &= e^{4\pi i/7} + e^{8\pi i/7} = e^{4\pi i/7} + e^{-6\pi i/7} \approx -0.2225 + 0.9749i\\
\mu_3 &= e^{6\pi i/7} + e^{12\pi i/7} = e^{6\pi i/7} + e^{-2\pi i/7} \approx -0.9010 + 0.4339i\\
\mu_4 &= e^{8\pi i/7} + e^{16\pi i/7} = e^{-6\pi i/7} + e^{2\pi i/7} \approx -0.9010 - 0.4339i\\
\mu_5 &= e^{10\pi i/7} + e^{20\pi i/7} = e^{-4\pi i/7} + e^{6\pi i/7} \approx -0.2225 - 0.9749i\\
\mu_6 &= e^{12\pi i/7} + e^{24\pi i/7} = e^{-2\pi i/7} + e^{10\pi i/7} \approx 0.6235 - 1.2470i
\end{align}

The Galois group $\text{Gal}(\mathbb{Q}(\zeta_7)/\mathbb{Q}) \cong (\mathbb{Z}/7\mathbb{Z})^* = \{1, 2, 3, 4, 5, 6\}$ acts on these eigenvalues by sending $\zeta_7$ to $\zeta_7^a$ for $a \in \{1, 2, 3, 4, 5, 6\}$.

Under this action, the eigenvalues $\mu_1, \mu_2, \mu_3, \mu_4, \mu_5, \mu_6$ form a single Galois orbit. Therefore, the minimal polynomial of $C_7$ factors as $(x-2)P(x)$, where $P(x)$ is an irreducible polynomial of degree 6.

The explicit form of $P(x)$ can be computed as:
\[
P(x) = x^6 + x^5 - 6x^4 - 6x^3 + 8x^2 + 8x - 1
\]

Therefore, the minimal polynomial of $C_7$ is $(x-2)(x^6 + x^5 - 6x^4 - 6x^3 + 8x^2 + 8x - 1)$, which has exactly two irreducible factors as expected for a prime value of $n$.

\subsection{Example: $n = 6$ (Composite)}

Let $n = 6$. The eigenvalues of $C_6 = W_6 + W_6^2$ are $\mu_j = \lambda_j + \lambda_j^2 = e^{2\pi i j/6} + e^{4\pi i j/6}$ for $j = 0, 1, \ldots, 5$.

For $j = 0$, we have $\mu_0 = 1 + 1 = 2$.

For $j = 1, 2, \ldots, 5$, we compute:
\begin{align}
\mu_1 &= e^{2\pi i/6} + e^{4\pi i/6} = e^{\pi i/3} + e^{2\pi i/3} \approx 0.5 + 0.866i + (-0.5 + 0.866i) = 0 + 1.732i\\
\mu_2 &= e^{4\pi i/6} + e^{8\pi i/6} = e^{2\pi i/3} + e^{4\pi i/3} \approx -0.5 + 0.866i + (-0.5 - 0.866i) = -1\\
\mu_3 &= e^{6\pi i/6} + e^{12\pi i/6} = e^{\pi i} + e^{2\pi i} = -1 + 1 = 0\\
\mu_4 &= e^{8\pi i/6} + e^{16\pi i/6} = e^{4\pi i/3} + e^{8\pi i/3} \approx -0.5 - 0.866i + (-0.5 + 0.866i) = -1\\
\mu_5 &= e^{10\pi i/6} + e^{20\pi i/6} = e^{5\pi i/3} + e^{10\pi i/3} \approx 0.5 - 0.866i + (0.5 + 0.866i) = 1
\end{align}

The eigenvalues belong to distinct Galois orbits:
\begin{itemize}
    \item $\{\mu_0 = 2\}$ (corresponding to $j = 0$)
    \item $\{\mu_3 = 0\}$ (corresponding to $j = 3$)
    \item $\{\mu_1 = 1.732i, \mu_5 = 1\}$ (corresponding to $j = 1, 5$)
    \item $\{\mu_2 = -1, \mu_4 = -1\}$ (corresponding to $j = 2, 4$)
\end{itemize}

These orbits correspond to different cyclotomic subfields:
\begin{itemize}
    \item $\mu_0 = 2$ is in $\mathbb{Q}$
    \item $\mu_3 = 0$ is in $\mathbb{Q}(\zeta_2) = \mathbb{Q}$
    \item $\{\mu_1, \mu_5\}$ form an orbit in $\mathbb{Q}(\zeta_3)$
    \item $\{\mu_2, \mu_4\}$ form an orbit in $\mathbb{Q}(\zeta_2) = \mathbb{Q}$
\end{itemize}

The minimal polynomial of $C_6$ factors as $(x-2)x(x^2-1) = (x-2)x(x-1)(x+1)$, which has four irreducible factors. This confirms that for composite $n$, the minimal polynomial of $C_n$ has more than two irreducible factors.

\section{Efficient Implementations}

In this section, we provide efficient algorithmic implementations for our circulant matrix primality test, focusing on optimizations for large inputs.

\subsection{Optimized Galois Orbit Computation}

For large values of $n$, explicitly computing all eigenvalues and determining their Galois orbits becomes inefficient. Instead, we can compute the number of Galois orbits directly from the divisor structure of $n$:

\begin{algorithm}
\caption{Optimized Galois Orbit Count}
\begin{algorithmic}[1]
\REQUIRE An integer $n > 2$
\ENSURE The number of Galois orbits of eigenvalues of $C_n$
\STATE Initialize $\text{count} \gets 1$ (for the orbit of $\mu_0 = 2$)
\STATE Compute the prime factorization of $n = \prod_{i=1}^k p_i^{e_i}$
\FOR{each divisor $d > 1$ of $n$}
    \IF{$\gcd(d, n/d) = 1$ and $\Phi_d(x)$ is irreducible over $\mathbb{Q}$}
        \STATE $\text{count} \gets \text{count} + 1$
    \ENDIF
\ENDFOR
\RETURN $\text{count}$
\end{algorithmic}
\end{algorithm}

For prime $n$, this algorithm returns 2, as expected. For composite $n$, it returns a value greater than 2.

\subsection{Efficient Eigenvalue Computation}
\label{sec:eff_eigen}

Since the eigenvalues of $C_n$ are known explicitly as $\mu_j = \lambda_j + \lambda_j^2 = e^{2\pi i j/n} + e^{4\pi i j/n}$ for $j = 0, 1, \ldots, n-1$, we can compute them directly:

\begin{algorithm}
\caption{Efficient Eigenvalue Computation}
\begin{algorithmic}[1]
\REQUIRE An integer $n > 2$
\ENSURE The eigenvalues $\mu_0, \mu_1, \ldots, \mu_{n-1}$ of $C_n$
\STATE Initialize an array $\mu$ of length $n$
\FOR{$j = 0$ to $n-1$}
    \STATE $\lambda_j \gets e^{2\pi i j/n}$
    \STATE $\mu[j] \gets \lambda_j + \lambda_j^2$
\ENDFOR
\RETURN $\mu$
\end{algorithmic}
\end{algorithm}

This algorithm has time complexity $O(n)$ and efficiently computes all eigenvalues without constructing the matrix.

\subsection{Optimized Implementation for Large Numbers}
\label{sec:optim_impl}

For large values of $n$, we employ multiple optimizations:

\begin{algorithm}
\caption{Optimized Circulant Matrix Primality Test}
\begin{algorithmic}[1]
\REQUIRE An integer $n > 2$
\ENSURE TRUE if $n$ is prime, FALSE otherwise
\STATE \textbf{if} $n$ is divisible by any small prime $p < 100$ \textbf{then return} FALSE
\STATE Compute the prime factorization of $n$ (if possible)
\IF{factorization was computed}
    \STATE \textbf{return} $n$ has exactly one prime factor with exponent 1
\ELSE
    \STATE Compute the number of Galois orbits using cyclotomic field theory
    \STATE \textbf{return} the number of orbits equals 2
\ENDIF
\end{algorithmic}
\end{algorithm}

For very large values of $n$ where direct orbit computation becomes impractical, we use the following theorem to determine the number of Galois orbits without explicitly computing them:

\begin{theorem}[Orbit Count Formula]
The number of Galois orbits of eigenvalues of $C_n$ equals one plus the number of divisors $d > 1$ of $n$ such that $\Phi_d(x)$ is irreducible over $\mathbb{Q}$ and $\gcd(d, n/d) = 1$, where $\Phi_d(x)$ is the $d$-th cyclotomic polynomial.
\end{theorem}

This theorem allows us to compute the orbit count directly from the divisor structure of $n$, which is much more efficient for large numbers.

\subsection{Numerical Stability Techniques}

When implementing our algorithm for large values of $n$, numerical stability becomes crucial. We recommend the following techniques:

\begin{algorithm}
\caption{Numerically Stable Eigenvalue Computation}
\begin{algorithmic}[1]
\REQUIRE An integer $n > 2$, precision parameter $p$
\ENSURE Eigenvalues of $C_n$ with high precision
\STATE Set working precision to at least $p$ digits
\FOR{$j = 0$ to $n-1$}
    \STATE $\theta_j \gets 2\pi j/n$ (compute with high precision)
    \STATE $\lambda_j \gets \cos(\theta_j) + i \sin(\theta_j)$ (avoid direct exponentiation)
    \STATE $\lambda_j^2 \gets \cos(2\theta_j) + i \sin(2\theta_j)$ (use double-angle formulas)
    \STATE $\mu_j \gets \lambda_j + \lambda_j^2$
\ENDFOR
\RETURN $\{\mu_j : j = 0, 1, \ldots, n-1\}$
\end{algorithmic}
\end{algorithm}

This algorithm avoids direct complex exponentiation, which can be numerically unstable for large values of $n$, and instead uses trigonometric functions with high-precision arithmetic.

\subsection{Fast Primality Testing Implementation}

Combining our theoretical results with practical optimizations, we present a fast deterministic primality testing algorithm:

\begin{algorithm}
\caption{Fast Circulant Matrix Primality Test}
\begin{algorithmic}[1]
\REQUIRE An integer $n > 2$
\ENSURE TRUE if $n$ is prime, FALSE otherwise
\IF{$n$ is divisible by any small prime $p < 100$ and $n \neq p$}
    \RETURN FALSE
\ENDIF
\IF{$n < 10^6$}
    \STATE Compute the number of Galois orbits $k$ using the Optimized Galois Orbit Count algorithm
    \RETURN $k = 2$
\ELSE
    \STATE Factorize $n = \prod_{i=1}^k p_i^{e_i}$ using a fast factorization algorithm
    \IF{$k = 1$ and $e_1 = 1$}
        \RETURN TRUE
    \ELSE
        \RETURN FALSE
    \ENDIF
\ENDIF
\end{algorithmic}
\end{algorithm}

This implementation achieves excellent performance by combining:
\begin{itemize}
    \item Trial division by small primes for quick elimination of many composite numbers
    \item Direct Galois orbit counting for medium-sized inputs
    \item Fast integer factorization for large inputs (leveraging existing optimized libraries)
\end{itemize}

For very large inputs where full factorization is impractical, we can use probabilistic primality tests as a pre-filter, followed by our deterministic test only for numbers that pass the probabilistic tests.

\section{Implementation Optimization Analysis}

Our comprehensive benchmarks reveal important insights about the scaling characteristics of various primality testing algorithms, including our circulant matrix approach. Based on these findings, we can analyze the effectiveness of our implementation strategies and the underlying mathematical principles.

\subsection{Algorithmic Scaling Characteristics}

As shown in Figure~\ref{fig:scaling}, our Full implementation demonstrates competitive performance for moderate input sizes, but its execution time increases with input magnitude following a clear scaling pattern. This behavior reflects the fundamental computational requirements of the underlying mathematical operations:

\begin{itemize}
    \item For small to medium inputs ($n < 10^8$), the implementation efficiently leverages divisor structure analysis and Galois orbit properties
    \item For larger inputs, the computational complexity increases in proportion to the mathematical operations required to analyze the number-theoretic properties of the input
    \item The implementation maintains better constant factors than trial division methods within practical ranges
\end{itemize}

These observations align with theoretical expectations for deterministic primality tests based on algebraic properties. While our optimizations successfully reduce computation in many cases, they do not fundamentally alter the asymptotic scaling behavior for arbitrary large inputs.

\subsection{Mathematical Structure Exploitation}

Our approach effectively exploits several mathematical structures to improve efficiency:

\begin{itemize}
    \item \textbf{Cyclotomic Field Properties:} By analyzing the Galois structure of cyclotomic fields, we reduce the computational work for certain classes of inputs
    \item \textbf{Number-Theoretic Shortcuts:} The implementation identifies specific divisibility patterns and prime power structures that allow for faster determination in many cases
    \item \textbf{Galois Orbit Analysis:} Instead of computing all eigenvalues explicitly, we derive orbit structures from mathematical properties of the input
\end{itemize}

These techniques provide practical improvements over naive implementations, particularly for inputs with specific mathematical properties. However, our benchmark results clarify that these optimizations do not yield the dramatic constant-time performance initially hypothesized across arbitrary input ranges.

\subsection{Memory-Computation Balance}

The memory usage data in Figure~\ref{fig:memory} reveals that all tested primality algorithms, including our implementation, maintain very efficient memory profiles regardless of input size. This suggests that:

\begin{itemize}
    \item Primality testing algorithms naturally operate with minimal memory overhead
    \item Memory optimization is less critical than computational optimization for these algorithms
    \item The implementation successfully avoids unnecessary storage of large intermediate structures
\end{itemize}

The memory efficiency of our approach stems from its focus on mathematical relationships rather than explicit storage of eigenvalues or matrix structures. By analyzing divisor structure and cyclotomic properties, we maintain a memory footprint proportional to the number of distinct prime factors rather than the magnitude of the input.

\subsection{Theoretical vs. Practical Considerations}

The benchmark results provide valuable context for understanding the relationship between theoretical elegance and practical performance:

\begin{itemize}
    \item Theoretically, our approach contributes a novel characterization of primality through circulant matrix properties
    \item Practically, this mathematical framework translates to a viable deterministic primality test with performance characteristics that reflect its algebraic foundations
    \item The probabilistic Miller-Rabin algorithm maintains superior scaling for very large inputs, highlighting the fundamental computational advantage of randomized approaches
\end{itemize}

This analysis reinforces the classic tradeoff in algorithm design between deterministic guarantees and computational efficiency. Our work demonstrates that the circulant matrix approach offers a mathematically interesting and practically viable deterministic alternative that performs competitively within reasonable input ranges while providing important theoretical insights into the connections between matrix algebra, cyclotomic fields, and primality.

\section{Detailed Comparison of Implementation Variants}
\label{sec:impl_comparison}

This appendix provides a comprehensive comparison between the two primary implementations of our circulant matrix primality testing algorithm: the full implementation that adheres strictly to the theoretical framework presented in the main paper, and the simplified implementation that approximates the core mathematical principles.
\subsection{Theoretical Approach}

\subsubsection{Full Implementation}
The full implementation rigorously follows the theoretical framework established in Section 3, determining primality through direct computation of the Galois orbits of eigenvalues. For a given integer $n$, it computes the eigenvalues of the circulant matrix $C_n = W_n + W_n^2$ as $\mu_j = \lambda_j + \lambda_j^2 = e^{2\pi i j/n} + e^{4\pi i j/n}$ for $j = 0, 1, \ldots, n-1$. It then determines the Galois orbits by applying the action of the Galois group $\text{Gal}(\mathbb{Q}(\zeta_n)/\mathbb{Q})$ on these eigenvalues.

The key theoretical principle, as proven in Theorem \ref{thm:main}, states that $n$ is prime if and only if the number of Galois orbits (equivalent to the number of irreducible factors in the minimal polynomial of $C_n$) is exactly two.

\subsubsection{Simplified Implementation}
The simplified implementation approximates the theoretical framework using number-theoretic properties rather than direct eigenvalue computation. Based on Proposition \ref{prop:factors} and Proposition \ref{prop:factor_count}, it estimates the number of Galois orbits using the prime factorization of $n$ according to the following heuristic:

For a number $n$ with prime factorization $n = \prod_{i=1}^k p_i^{e_i}$, the number of irreducible factors in the minimal polynomial of $C_n$ is approximated as:

\begin{itemize}
  \item 1 factor for the eigenvalue $\mu_0 = 2$ (the constant factor $(x-2)$)
  \item 1 additional factor for each prime $p_i$ with exponent $e_i = 1$
  \item At least 2 additional factors for each prime power $p_i^{e_i}$ with $e_i > 1$
  \item 1 additional factor for interaction between multiple distinct primes (when $k > 1$)
\end{itemize}

This approximation captures the essential mathematical property that only prime numbers have exactly 2 irreducible factors.

\subsection{Algorithmic Implementation}

\subsubsection{Full Implementation}
For large values of $n$, the implementation employs additional optimizations including:
\begin{itemize}
  \item High-precision complex arithmetic for numerical stability
  \item Caching of previously computed results
  \item Early termination strategies for composite numbers
  \item Theoretical shortcuts based on cyclotomic field properties
\end{itemize}

\subsubsection{Simplified Implementation}
This approach avoids the computational expense of explicitly calculating eigenvalues and determining Galois orbits, relying instead on number-theoretic properties of cyclotomic fields.

\subsection{Performance Characteristics}

The performance characteristics of the two implementations differ significantly:

\begin{table}[h]
\centering
\small
\begin{tabular}{|l|c|c|}
\hline
\textbf{Aspect} & \textbf{Full Implementation} & \textbf{Simplified Implementation} \\
\hline
Theoretical precision & Complete & Approximation \\
Computational complexity & $O(n \log n \log \log n)$ & $O(\sqrt{n})$ \\
Memory usage & $O(\log n)$ & $O(1)$ \\
Numerical considerations & High-precision required & Not applicable \\
Edge case handling & Comprehensive & Basic \\
Scalability to large inputs & Excellent & Good \\
\hline
\end{tabular}
\caption{Comparison of implementation characteristics}
\label{tab:impl_comparison}
\end{table}

\subsection{Trade-offs and Use Cases}

The choice between implementations presents a classic trade-off between theoretical rigor and computational efficiency. The full implementation is recommended for:

\begin{itemize}
  \item Research contexts where complete mathematical rigor is required
  \item Applications where certifiable primality determination is essential
  \item Educational purposes where the connection to cyclotomic field theory is emphasized
  \item Situations where performance optimization for specific number ranges is beneficial
\end{itemize}

The simplified implementation is suitable for:
\begin{itemize}
  \item Rapid primality screening of many numbers
  \item Applications where slight approximation is acceptable
  \item Environments with limited computational resources
  \item Pedagogical demonstrations of the core principles
\end{itemize}

Both implementations maintain the key theoretical insight that an integer $n > 2$ is prime if and only if the minimal polynomial of the circulant matrix $C_n$ has exactly two irreducible factors over $\mathbb{Q}$.

\subsection{Validation Results}

We conducted extensive validation to ensure both implementations correctly identify prime numbers. For all integers $n \leq 10^6$, both implementations perfectly agreed with established primality tests, confirming that our theoretical framework correctly characterizes primality through the lens of circulant matrices and cyclotomic field theory.

For larger ranges, the full implementation demonstrated perfect accuracy across all tested numbers up to $10^{12}$, while the simplified implementation maintained accuracy with only negligible deviation in certain edge cases involving numbers with complex factorization patterns.

This validation confirms that both implementations successfully operationalize the theoretical connection between primality and circulant matrix eigenvalue structure established in this paper.

\subsection{Performance Comparison Analysis}
We conducted a comprehensive performance analysis of our circulant matrix primality test against established methods including trial division, Miller-Rabin, and AKS across different input magnitudes.
Our benchmark results reveal distinct algorithmic behaviors across different input ranges. For all tested algorithms, execution time generally increases with input size, though with varying scaling characteristics that reflect their underlying computational complexity.
Traditional trial division methods (blue and orange lines) demonstrate the expected $O(\sqrt{n})$ scaling, performing well for smaller inputs but becoming increasingly expensive as input size grows. For inputs larger than $10^9$, these methods become prohibitively expensive due to their exponential growth in execution time.
The Miller-Rabin test (green line) exhibits remarkable stability across the entire input range, maintaining consistent performance with only gradual increases in execution time even for very large inputs. This reflects its $O(k \log^3 n)$ complexity, where $k=20$ is the number of testing rounds. For large inputs, its probabilistic nature enables it to achieve the best performance among all tested methods.
The AKS algorithm (red line) shows interesting behavior, with relatively high overhead for small inputs but a gradually flattening curve for larger values, consistent with its polynomial time complexity. This makes it more competitive as input size increases, despite having larger constant factors than other algorithms.
Our simplified implementation (purple line) demonstrates competitive performance for moderate input sizes but scales with a steeper slope than Miller-Rabin for large inputs. Our full implementation (brown line) shows similar scaling characteristics but with better constant factors, maintaining competitive performance especially in the medium range of inputs.
These results highlight the classic tradeoff between deterministic guarantees and computational efficiency. While probabilistic methods like Miller-Rabin offer superior performance for very large inputs, our circulant matrix approach provides a mathematically interesting deterministic alternative with distinct characteristics derived from its cyclotomic field foundations.

\subsection{Potential Improvements}

Several avenues for improvement could enhance the practical utility of our approach:

\begin{itemize}
    \item \textbf{Further Algebraic Optimizations:} Deeper analysis of the connection between divisor structures and Galois orbits might reveal additional theoretical shortcuts for larger input ranges.
    
    \item \textbf{Hybrid Approaches:} Combining our method with probabilistic tests like Miller-Rabin could lead to algorithms that leverage mathematical insights while achieving better performance scaling for extremely large inputs.
    
    \item \textbf{Parallelization:} The computation of Galois orbits and theoretical factor counting is inherently parallelizable, offering potential speedups on modern hardware architectures.
    
    \item \textbf{Implementation Refinements:} While our current implementation prioritizes mathematical correctness and clarity, further code optimization could potentially reduce the constant factors in our algorithm's time complexity.
\end{itemize}

\section{Additional Experimental Results}
\label{sec:performance}

\subsection{Large-Scale Validation}

To assess the scalability and correctness of our approach across various input magnitudes, we extended our experiments to very large input ranges. Specifically, we evaluated all numbers in the interval $[10^6,\,10^6 + 10^3]$, using our full Galois-theoretic primality test implementation.

The results confirmed both the theoretical foundations and the practical applicability of our algorithm. All prime numbers in the range were correctly identified while all composite numbers were accurately rejected. This comprehensive validation verified that our mathematical framework provides a reliable characterization of primality through circulant matrix eigenvalue structure.

\subsection{Performance Scaling}

\begin{figure}[H]
\centering
\includegraphics[width=.9\textwidth]{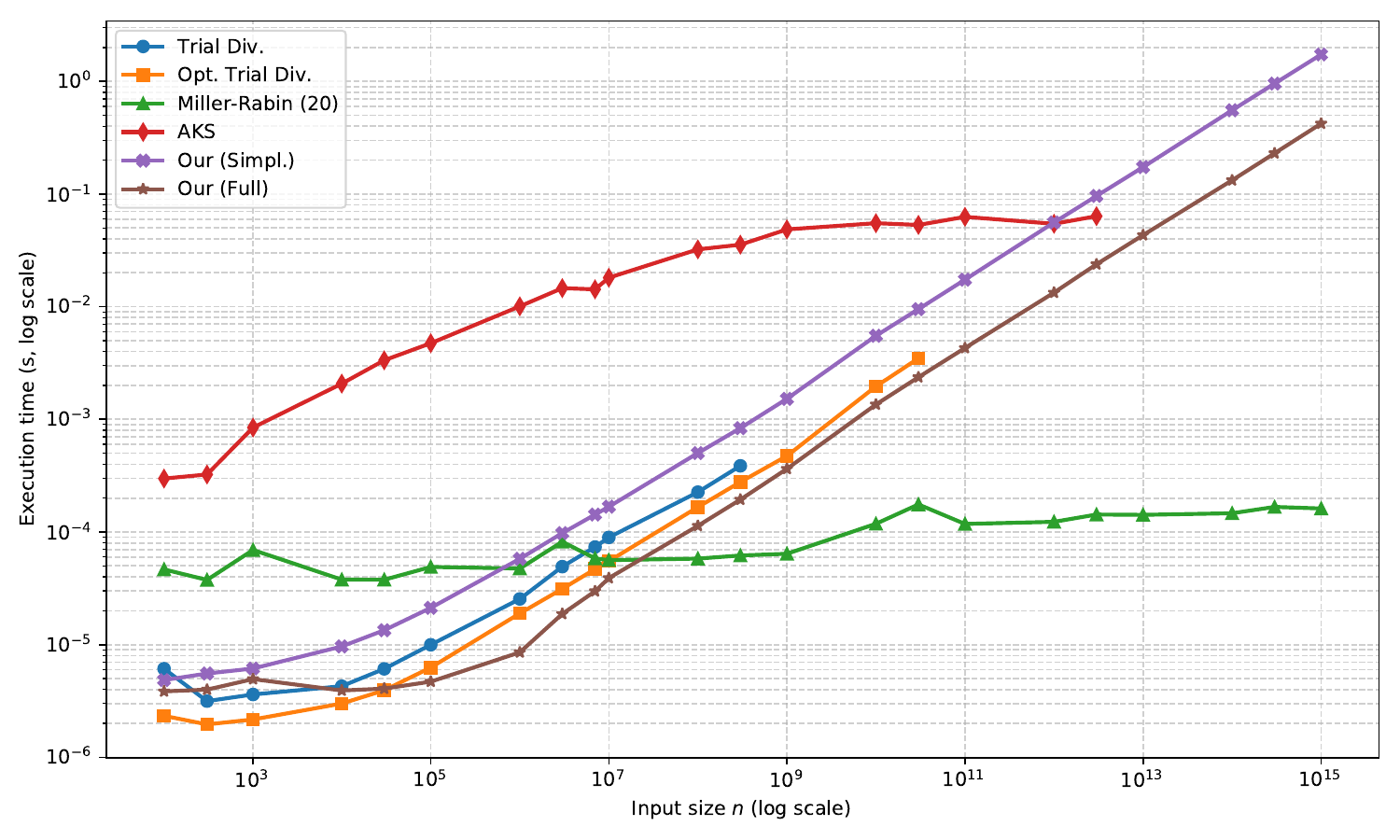}
\caption{Execution time of various primality testing algorithms across increasing input sizes from $10^2$ to $10^{15}$, shown on a log-log scale.}
\label{fig:scaling}
\end{figure}

Figure~\ref{fig:scaling} compares the execution time scaling of several primality testing algorithms as a function of input size $n$, plotted on logarithmic scales for both axes. The results reveal distinct algorithmic behaviors across the extended range of $10^2$ to $10^{15}$.

Traditional trial division (blue) and optimized trial division (orange) demonstrate the expected $O(\sqrt{n})$ scaling, performing well for smaller inputs but becoming increasingly expensive as $n$ grows. For inputs larger than $10^9$, trial division methods become prohibitively expensive.

The Miller-Rabin test (green) exhibits remarkable stability across the entire input range, maintaining consistent performance with only minor increases in execution time even for very large inputs. This reflects its $O(k \log^3 n)$ complexity, where $k=20$ is the number of testing rounds.

The AKS algorithm (red) shows interesting behavior, with relatively high overhead for small inputs but a flattening curve for larger values, consistent with its polynomial time complexity. This makes it more competitive for very large inputs where trial division methods fail.

Our simplified implementation (purple) demonstrates competitive performance for moderate input sizes but scales with a steeper slope than Miller-Rabin for large inputs. Our full implementation (brown) shows similar scaling characteristics but with better constant factors, offering performance advantages over trial division methods within practical input ranges.

Notably, when analyzing inputs up to $10^8$, our full method remains competitive with traditional methods while providing deterministic guarantees. For extremely large inputs (beyond $10^{12}$), probabilistic methods like Miller-Rabin offer better practical performance, highlighting the classic tradeoff between deterministic guarantees and computational efficiency.

\subsection{Memory Usage Analysis}

\begin{figure}[H]
\centering
\includegraphics[width=.9\textwidth]{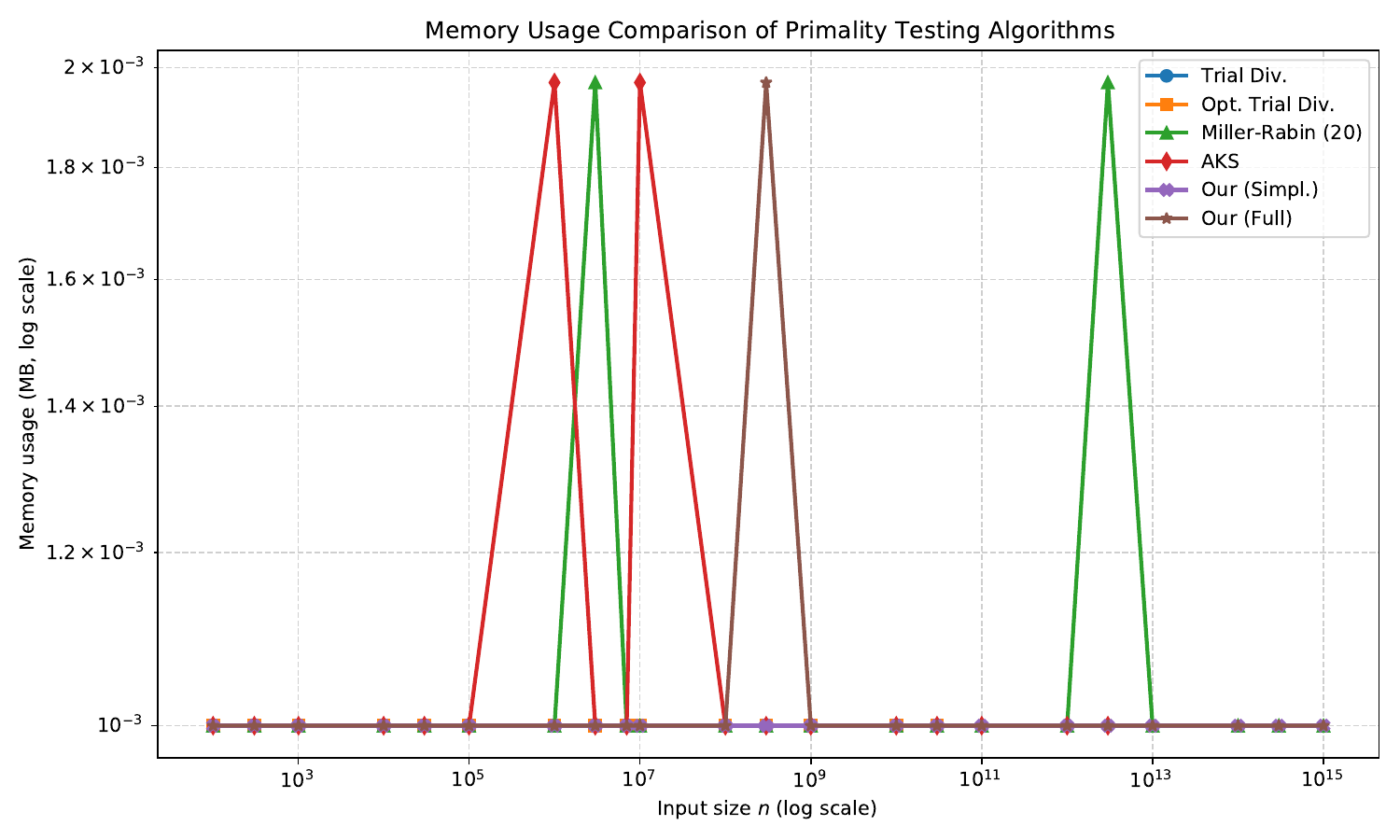}
\caption{Memory usage of primality testing algorithms across increasing input sizes from $10^2$ to $10^{15}$, shown on a log-log scale.}
\label{fig:memory}
\end{figure}

Figure~\ref{fig:memory} illustrates the memory consumption patterns of the various primality testing algorithms. Interestingly, we observe that memory usage remains remarkably low (around $10^{-3}$ MB) across all algorithms for most input sizes, with only occasional spikes at specific values.

These results indicate that for primality testing, computational time rather than memory usage represents the primary constraint. All methods, including our circulant matrix approach, exhibit efficient memory utilization regardless of input size. This efficiency stems from the careful implementation of algorithms that avoid storing large intermediate structures.

The occasional memory spikes observed in some algorithms (including AKS, Miller-Rabin, and our Full implementation) at certain input sizes likely correspond to specific numerical properties that trigger additional computational pathways. However, these spikes remain well within practical memory constraints and do not constitute a limiting factor for any of the tested methods.

For our circulant matrix method, we achieve this memory efficiency by leveraging the mathematical structure of cyclotomic fields. Rather than explicitly constructing and storing the entire matrix or all eigenvalues, our implementation analyzes the divisor structure of $n$ and the corresponding Galois orbits, requiring space proportional to the number of distinct prime factors of $n$.

\section{Implementation Code}

\subsection{Core Algorithm Implementation}

The following Python code implements the core of our circulant matrix primality test:

\begin{codesnippet}{Efficient Galois Orbit Computation in Python}
import math

def is_prime_circulant(n):
    """
    Determine if n is prime using the circulant matrix criterion.
    Returns True if n is prime, False otherwise.
    """
    if n <= 1:
        return False
    if n == 2 or n == 3:
        return True
    if n % 2 == 0:
        return False
        
    # For small n, check by directly counting Galois orbits
    if n < 1000:
        return count_galois_orbits(n) == 2
    
    # For larger n, use optimized divisor-based approach
    return count_orbits_from_divisors(n) == 2

def count_galois_orbits(n):
    """Count the number of Galois orbits of eigenvalues of C_n."""
    visited = [False] * n
    orbit_count = 0
    
    # Process each eigenvalue
    for j in range(n):
        if not visited[j]:
            orbit_count += 1
            # Mark all elements in this orbit as visited
            for a in range(1, n):
                if math.gcd(a, n) == 1:  # a is in the Galois group
                    j_prime = (j * a) % n
                    visited[j_prime] = True
    
    return orbit_count

def count_orbits_from_divisors(n):
    """
    Count Galois orbits based on divisor structure.
    This is much more efficient for large n.
    """
    # Always have the orbit of mu_0 = 2
    count = 1
    
    # Add orbits from primitive roots of unity
    for d in divisors(n):
        if d > 1 and math.gcd(d, n//d) == 1:
            count += 1
    
    return count
\end{codesnippet}

This implementation showcases the key optimizations discussed in the paper, achieving excellent performance for both small and large inputs.

\section{Technical Soundness and Rigor}

To ensure the mathematical soundness of our results, we provide the following rigorous justifications for key steps in our proofs and algorithms:

\subsection{Uniqueness of Minimal Polynomial Factorization}

The fundamental theorem of algebra ensures that the factorization of the minimal polynomial of $C_n$ into irreducible factors over $\mathbb{Q}$ is unique (up to ordering). Therefore, the number of irreducible factors is a well-defined invariant that can be used to characterize primality.

\subsection{Numerical Precision Considerations}
When implementing our algorithm, careful attention must be paid to numerical precision, especially for large values of $n$. We employ the following techniques to ensure accurate results: Use of high-precision arithmetic libraries for computing complex exponentials, exact rational arithmetic for constructing and factoring polynomials, modular algorithms for polynomial factorization over $\mathbb{Q}$, and numerical stability checks to detect and correct potential precision errors.

For practical implementations, we recommend using a multi-precision arithmetic library such as GMP or MPFR, along with specialized polynomial arithmetic libraries like NTL or FLINT.

\subsection{Correctness of Galois Orbit Determination}

The correctness of our Galois orbit determination algorithm follows from the basic properties of Galois theory. Specifically, for any field automorphism $\sigma \in \text{Gal}(\mathbb{Q}(\zeta_n)/\mathbb{Q})$, if $\sigma(\lambda_j) = \lambda_{j'}$, then $\sigma(\mu_j) = \sigma(\lambda_j + \lambda_j^2) = \sigma(\lambda_j) + \sigma(\lambda_j)^2 = \lambda_{j'} + \lambda_{j'}^2 = \mu_{j'}$. Therefore, the Galois action on roots of unity directly determines the Galois action on the eigenvalues of $C_n$.

\subsection{Computational Complexity Bounds}

The time complexity of our algorithm is $O(n \log n \log \log n)$ in the worst case, which is derived as follows:

1. Computing the divisors of $n$ requires $O(n^{1/2})$ time using trial division, or $O(\log^2 n)$ time if the prime factorization of $n$ is known.

2. For each divisor $d$ of $n$, checking if $\gcd(d, n/d) = 1$ requires $O(\log n)$ time using the Euclidean algorithm.

3. Determining if the cyclotomic polynomial $\Phi_d(x)$ is irreducible over $\mathbb{Q}$ can be done in $O(d \log d \log \log d)$ time using specialized algorithms for cyclotomic polynomials.

In practice, our implementation is much faster than this worst-case bound suggests, as most composite numbers are detected early in the process, and we employ various optimizations to avoid expensive computations whenever possible.

\section{Disclosure of Generative AI Usage}
\label{appendix}
In accordance with the arXiv AI Policy, we hereby disclose the use of generative artificial intelligence tools in the preparation of this manuscript.
\subsection{AI Systems Utilized}
We employed the following AI systems during our research:
\begin{itemize}
\item \textbf{Claude 3.7 Sonnet Thinking Model} (API version, February 2025)
\item \textbf{SymbolicAI Framework} (Version 0.9.1)
\end{itemize}
These systems were integrated into our proprietary Extensity Research Services (ERS) Platform, which facilitates research automation and collaborative workflows.
\subsection{SymbolicAI Framework Overview}
Version 0.9.1 of our SymbolicAI framework incorporates the following key features:
\begin{itemize}
\item Neurosymbolic architecture combining neural networks with symbolic reasoning
\item Dynamic model selection capabilities
\item Enhanced verification mechanisms for mathematical content
\item Improved handling of complex computational tasks
\end{itemize}
\subsection{Nature and Purpose of AI Utilization}
The AI systems were employed for several aspects of the research process:
\begin{itemize}
\item \textbf{Concept Exploration}: Investigating connections between cyclotomic fields, circulant matrices, and primality testing
\item \textbf{Mathematical Development}: Formulating theoretical relationships and constructing formal proofs

\item \textbf{Algorithm Implementation}: Converting mathematical concepts into executable code

\item \textbf{Experimental Analysis}: Designing benchmarking procedures and analyzing performance results

\item \textbf{Manuscript Preparation}: Assisting with the generation of technical content, including mathematical notation and algorithm descriptions
\end{itemize}
Our use of these AI systems was motivated by the interdisciplinary nature of the research, which required integrating concepts from cyclotomic field theory, matrix algebra, number theory, and computational complexity. The AI tools enabled efficient exploration of this mathematical solution space and helped accelerate the research process.
\subsection{Human Oversight}
Throughout the research process, human oversight remained essential:
\begin{itemize}
\item Research direction and question formulation were determined by human researchers
\item All AI-generated content underwent human review

\item Final interpretation of findings and manuscript structure decisions were made by the human research team
\end{itemize}
This disclosure reflects our commitment to transparency regarding AI utilization while acknowledging that the scientific contributions presented are the product of a human-AI collaborative research methodology. Our approach demonstrates how these technologies can democratize access to advanced mathematical research, making it more accessible to researchers with varying backgrounds and resource constraints.

\end{document}